\documentclass[a4paper]{amsart}
\usepackage{microtype}

\usepackage{amssymb}
\usepackage{complexity}
\usepackage{braket}
\usepackage[hidelinks]{hyperref}
\usepackage{orcidlink}

\usepackage{tikz}
\usetikzlibrary{quantikz2}
\usepackage{adjustbox}

\usepackage{thmtools} % to fix \autoref
\newtheorem{theorem}{Theorem}
\newtheorem{lemma}[theorem]{Lemma}
\newtheorem{corollary}[theorem]{Corollary}
\newtheorem{fact}[theorem]{Fact}

\theoremstyle{definition}
\newtheorem{definition}[theorem]{Definition}

\newtheorem{example}[theorem]{Example}

\DeclareMathOperator{\cert}{cert}
\DeclareMathOperator{\pre}{pre}
\DeclareMathOperator{\var}{var}
\DeclareMathOperator{\val}{val}
\DeclareMathOperator{\depth}{depth}

\let\int\relax
\DeclareMathOperator{\int}{int}

\newcommand{\n}{\overline}
\newcommand{\gateset}{\mathcal G}
\renewcommand{\C}{\mathbb C}
\renewcommand{\R}{\mathbb R}

\title{Quantum state preparation for weighted \MakeLowercase{d}-DNNF}
\author[S. Hegeman]{Steef Hegeman\,\orcidlink{0009-0009-0368-8502}}
\address{Leiden University}
\email{s.hegeman@liacs.leidenuniv.nl}
\author[J.\,H. Lee]{Joon Hyung Lee\,\orcidlink{0000-0001-8628-9922}}
\email{j.h.lee@liacs.leidenuniv.nl}
\author[A. Laarman]{Alfons Laarman\,\orcidlink{0000-0002-2433-4174}}
\email{a.w.laarman@liacs.leidenuniv.nl}

\begin{document}

\begin{abstract}
The quantum state preparation problem is to,
given a description of a quantum state, 
efficiently generate a quantum circuit computing the state.
We show that for quantum states described by weighted d-DNNF
(deterministic, decomposable pseudo-Boolean circuits)
a quantum circuit computing the state can be obtained in linear time
up to complex arithmetic.
\end{abstract}

\maketitle

\section{Introduction}\label{sec:intro}
Many well-known quantum algorithms
assume that their input is available as some entangled quantum state.
The quantum state preparation problem (QSP)
of actually preparing this state from a classical description
is hard in general \cite{sun_asymptotically_2023}
and can even cancel out an algorithm's quantum advantage \cite{aaronson2015read}.
We show how QSP can be done efficiently for quantum states represented as weighted d-DNNF circuits.

In particular,
we show that for any weighted d-DNNF
(a deterministic, decomposable pseudo-Boolean circuit) $D : \{ 0,1 \}^n \to \C$,
state $\sum_x D(x) \ket x$ can be prepared by a quantum circuit
of size $O(|D|)$ and depth $O(\depth(D) \log |D|)$,
and that a description of this quantum circuit can be computed classically in $O(|D|)$ time up to complex arithmetic (in a computational RAM model where operations $+, \times, \div, |\cdot|, \sqrt{\cdot}$ can be performed on complex number registers in constant time).

Unweighted d-DNNF \cite{darwiche_decomposable_2001, darwiche_tractable_2001}
is a well-known representation language
from the field of classical knowledge compilation \cite{darwiche_knowledge_2002}
that is exponentially more succinct than the internal language of contemporary \#SAT model counting algorithms \cite{huang_language_2007,beame2013}.
As a quantum state representation language,
weighted d-DNNF is exponentially more succinct than weighted FBDD
(see \autoref{sec:related-work})
as recently prepared in \cite{tanaka2026qsp}.

\subsection*{Structure of the paper}
Section~\ref{sec:defs} defines the concepts used in the paper.
Section~\ref{sec:main-results} states the main results.
Section~\ref{sec:overview} gives an informal overview of the state preparation algorithm.
Sections~\ref{sec:impl}~to~\ref{sec:reweighting} prove the technical results.
In \autoref{sec:related-work} we discuss and compare with related work.

\section{Definitions}\label{sec:defs}
\subsection{Weighted d-DNNF}
Following \cite{quist2026},
a \emph{pseudo-Boolean circuit}
(over $\C$)
is an arithmetic circuit
consisting of $\times$-gates $g = \prod_i g_i$
and weighted $+$-gates $g = \sum_i w_i g_i$ with weights $w_i \in \C$,
whose leaves are of the form $x$ and $\n x$
for Boolean variables $x$.
Let $\var(g)$ refer to the set of variables $x$
so that $g$ depends syntactically on a leaf $x$ or $\n x$.
Given an assignment $\alpha : \var(g) \to \{ 0, 1 \}$,
the evaluation $g(\alpha)$ is obtained by replacing all $x$ with $\alpha(x)$
and $\n x$ with $1 - \alpha(x)$
and evaluating the circuit.
Pseudo-Boolean circuits
can be thought of as Boolean circuits in negation normal form (NNF) that weigh their assignments.
In this spirit,
we will sometimes call $\times$-gates AND-gates and
$+$-gates OR-gates.
A pseudo-Boolean circuit $D$
can describe a quantum state
$\sum_x D(x) \ket x$.

\begin{figure}
\begin{tikzpicture}[
	gt/.style={draw,circle,minimum size=7mm,inner sep=0pt},
	lit/.style={draw,rectangle,rounded corners,minimum size=6mm,inner sep=2pt},
	wt/.style={font=\tiny},
	every label/.style={font=\footnotesize},
	cert/.style={thick},
]
	\node[gt, cert] (g0) at (0,0)    {$\lor$};
	\node[gt,
		%label=left:{$g_1$}
		]  (g1) at (-1,-1) {$\land$};
	\node[gt, cert,
		%label=right:{$g_2$}
		] (g2) at (1,-1)  {$\land$};
	\node[lit] (x0)  at (-2,-2) {$x_0$};
	\node[lit, cert] (x1)  at (0,-2) {$x_1$};
	\node[gt, cert,
		%label=right:{$g_3$}
		] (g3) at (2,-2)  {$\land$};
	\node[lit, cert] (nx0)  at (1.5,-3.1) {$\n{x_0}$};
	\node[lit, cert] (x2)  at (2.5,-3.1) {$x_2$};

	\draw
		(g0) -- node[wt,left,yshift=3,xshift=3] {$1/\sqrt {10}$} (g1)
		(g1) -- (x0)
		(g1) -- (x1)
		;
	\draw[cert]
		(g0) -- node[wt,right,yshift=3,xshift=-3] {$2/\sqrt 5$} (g2)
		(g2) -- (x1)
		(g2) -- (g3)
		(g3) -- (nx0)
		(g3) -- (x2)
		;
\end{tikzpicture}
\caption{
	A weighted 2-d-DNNF $D$.
	It has three nonzero assignments and two certificates.
	The certificate of $D(0,1,1) = 2/\sqrt 5$ is shown in bold.
	$S_1(D) = 1/\sqrt{10} \ket{110} + 1/\sqrt{10}\ket{111} + 2/\sqrt 5 \ket{011}$.
}
\label{fig:example-2-d-DNNF}
\end{figure}

\emph{Weighted d-DNNF} are pseudo-Boolean circuits that are deterministic and decomposable.
\emph{Determinism} requires that for every $+$-gate $g = \sum_i w_i g_i$
the function
$x \mapsto w_i g_i(x) \cdot w_j g_j(x)$
is the zero function for all distinct $i,j$.
(In particular, if $g(x)$ is nonzero then exactly one $w_i g_i(x)$ is nonzero.)
\emph{Decomposability} requires that for every $\times$-gate $\prod_i g_i$,
the $g_i$ do not share variables:
$\var(g_i) \cap \var(g_j) = \varnothing$ for distinct $i,j$.
Without weights, this definition corresponds with constant-free d-DNNF of \cite{darwiche_tractable_2001}.

To improve readability,
we assume without loss of generality
that the output gate is not an input gate
(at least one $+$- or $\times$-gate),
that all weights are nonzero,
that every gate is (indirectly) connected to the output gate,
and that there is at most one positive and one negative literal per variable.
Our definition excludes constant gates,
though any nontrivial weighted d-DNNF with constants
can be made constant-free in linear time
(\autoref{app:constant-free}).

For most of the technical sections,
we work with ``nice'' bounded-degree d-DNNF.

\begin{definition}
A (weighted) \emph{2-d-DNNF}
is a (weighted) d-DNNF where all gates have fan-out and fan-in at most 2,
and there are no unary AND-gates.
\end{definition}

See \autoref{fig:example-2-d-DNNF}.
The assumption that there are no unary AND-gates
is made solely for readability purposes
(they are equivalent to unary OR-gates with weight 1).
Any weighted d-DNNF $D$ can be transformed into an equivalent weighted 2-d-DNNF $D'$
of depth $O(\depth(D)\log|D|)$
in linear time up to complex arithmetic.

We call a weighted d-DNNF $D$
($\ell_2$-)\emph{normalized}
if for all OR-gates with weights $w_i$
it holds that $\sum_i |w_i|^2 = 1$.

\subsection{Certificates}
Our construction rests on a key property of d-DNNF:
uniqueness of certificates.

\begin{definition}
	Let $D$ be a (weighted) d-DNNF.
	A \emph{certificate} of $D$ for assignment $x$
	is a smallest subcircuit of $D$ evaluating
	to a nonzero value under assignment $x$
	that includes the output gate.
	(So a certificate for $x$ witnesses $D(x) \neq 0$.)
\end{definition}

In other words: a certificate for $D(x) \neq 0$
is a connected subcircuit of $D$
that includes the output gate,
contains all children of AND-gates it contains,
exactly one child of every OR-gate it contains,
and is nonzero under the assignment $x$.

The following well-known fact follows from determinism and decomposability,
with the tree part using in addition that our definition of d-DNNF is constant-free.

\begin{fact}[Uniqueness of certificates]\label{fact:tree-certs}
	For every (weighted) d-DNNF,
	every assignment $x$ has at most one certificate,
	and every certificate is a tree.
\end{fact}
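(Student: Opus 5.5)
We prove the two parts separately, in both cases by induction on the structure of the circuit, using the characterization of a certificate given above: it is connected, contains the output gate, contains all children of each AND-gate it contains, contains exactly one child of each OR-gate it contains, and evaluates to a nonzero value under $x$.

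\emph{Tree part.} Since certificates are minimal, every gate in a certificate $C$ for $x$ is reachable from the output gate within $C$, and every gate of $C$ evaluates to a nonzero value under $x$. Indeed, if some gate evaluated to zero, then the product or the single surviving summand above it would vanish, and so would the output value. Suppose some gate $h$ in $C$ had two distinct parent edges in $C$. Consider the two paths from the output gate to $h$, and let $g$ be the last gate where they diverge. Then $g$ uses two different outgoing edges in $C$. Since $g$ keeps exactly one child when it is an OR-gate, $g$ must be an AND-gate, and $h$ lies below two distinct children $g_i \neq g_j$ of $g$. (Edges to the same child twice are excluded by treating the children of an AND-gate as distinct, since a repeated child would already violate decomposability.) Now $h$ is not a constant, because our circuits are constant-free, so $h$ has a leaf below it. Hence $\var(h) \neq \varnothing$, and $\var(h) \subseteq \var(g_i) \cap \var(g_j)$, which contradicts decomposability. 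So every gate of $C$ has at most one parent in $C$, and $C$ is a tree.

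\emph{Uniqueness part.} Suppose $C$ and $C'$ are both certificates for $x$. We show top-down that they select the same children. Both contain the output gate. Take a gate $g$ contained in both. If $g$ is an AND-gate, both contain all its children. If $g$ is an OR-gate $\sum_i w_i g_i$, then $C$ selects a child $g_i$ that is nonzero under $x$, and since weights are nonzero, $w_i g_i(x) \neq 0$. Likewise $C'$ selects some $g_j$ with $w_j g_j(x) \neq 0$. Determinism says $w_i g_i(x) \cdot w_j g_j(x) = 0$ whenever $i \neq j$, so $i = j$. By induction along paths from the output gate, which works because both certificates are connected and rooted there, $C = C'$.

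\emph{Main obstacle.} The delicate step is justifying that every gate of a certificate is nonzero under $x$. This is needed in both parts: in the uniqueness argument it lets us apply determinism to the selected child, and in the tree argument it underlies the setup. It follows from minimality together with the evaluation rule. A certificate evaluates an OR-gate as the single weighted surviving child and an AND-gate as the product of its children. So if the output is nonzero, every gate on the tree is nonzero, and this can be checked top-down.
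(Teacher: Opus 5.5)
You follow the route the paper indicates: determinism for uniqueness, decomposability plus constant-freeness for tree-ness. Your tree argument is sound. Note that it never uses that gates of $C$ are nonzero, so the step you flag as the main obstacle matters only for uniqueness. There, however, you conflate two different values. Your top-down argument shows that every gate $h$ of a certificate $C$ is nonzero in the evaluation of the subcircuit $C$ itself; call this value $\val_C(h)$. In that evaluation every OR-gate below $h$ keeps only its selected child. Determinism instead constrains the values $g_i(x)$ computed in $D$, where OR-gates sum over all their children. So from ``$C$ selects a child $g_i$ that is nonzero under $x$'' you may only conclude $\val_C(g_i)\neq 0$, not $g_i(x)\neq 0$. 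Without further argument these can differ by cancellation. For example, in a non-deterministic circuit with $g_i = a - b$, where $a,b$ compute the same nonzero function, the subcircuit selecting $a$ is nonzero at some $x$ while $g_i(x)=0$.

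The missing step is a bottom-up induction that itself uses determinism: for every gate $h$ of $C$, if $\val_C(h)\neq 0$ then $h(x)=\val_C(h)$.
For literals this is immediate.
For an AND-gate, $C$ contains all its children, each with nonzero $\val_C$, so the induction hypothesis gives equality of the products.
For an OR-gate $h=\sum_k w_k h_k$ with selected child $h_j$, we have $\val_C(h)=w_j\val_C(h_j)\neq 0$. This gives $w_jh_j(x)=w_j\val_C(h_j)\neq 0$, so determinism forces $w_kh_k(x)=0$ for $k\neq j$, and hence $h(x)=\val_C(h)$.
This is the paper's remark that $D(x)=\val(T_D(x))$, applied at every gate. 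With this lemma, your OR-case yields $g_i(x)\neq 0$ and $g_j(x)\neq 0$, and determinism then gives $i=j$ as you intended. Determinism is therefore needed twice: once to transfer nonzero-ness from $C$ to $D$, and once to compare $C$ with $C'$.
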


That is, seen as a subgraph of the d-DNNF,
a certificate is a tree:
every gate in a certificate feeds into at most one other gate in the certificate.

Given a d-DNNF $D$,
we write $\cert(D)$ for the set of certificates of $D$.
We write $T_D(x)$ for the certificate of $x$ in $D$ if it exists ($D(x) \neq 0$),
or $\varnothing$ otherwise ($D(x) = 0$).
If the context is clear,
we will drop the $D$ and write $T(x)$.
From determinism it can be shown that if $D(x) \neq 0$
then $D(x) = \val(T_D(x))$.
For weighted d-DNNF $D$, if $D(x) \neq 0$ then $D(x)$ is equal to the product of the edge weights in $T_D(x)$.

For any circuit $c$,
write $\var(c)$ for the set of variables mentioned in $c$
and $\int(c)$ for the set of non-input gates.
If it can be evaluated, write $\val(c)$ for its value.
If $g$ is a gate in some circuit $D$,
we will sometimes refer with $g$ to the maximal subcircuit of $D$ rooted at $g$,
so that $\cert(g), \var(g), \val(g)$ are defined.
(In particular, if $r$ is the output gate of $D$,
then $r$ refers to the circuit $D$.)

While every assignment of a d-DNNF $D$ has at most one certificate,
multiple assignments can have the same certificate.
A certificate $\tau$ certifies
$m_D(\tau) = 2^{|\var(D)| - |\var(\tau)|}$ assignments.
For assignments $x$,
we write $m_D(x)$ for $m_D(T(x))$.

\subsection{Quantum state preparation}\label{subsec:qsp}
Given a weighted d-DNNF $D$,
we define two associated vectors:
\begin{align*}
	S_1(D) &= \sum_{x \in 2^{|\var(D)|}} D(x) \ket x\\
	S_2(D) &= \sum_{x \in 2^{|\var(D)|}} \frac{D(x)}{\sqrt{m_D(x)}}\ket x
	= \sum_{\tau \in \cert(D)} \val(\tau) \chi(\tau)
\end{align*}
where $\chi(\tau)_x$ is $\ket 1, \ket 0$
if $x$ appears positively, negatively in $\tau$,
and $\ket +$ otherwise.

If $D$ is smooth \cite{darwiche_tractable_2001},
meaning that $\var(\tau) = \var(D)$ for all certificates $\tau$,
then $S_1(D) = S_2(D)$.
While the usual objective of quantum state preparation would be
a circuit computing $S_1(D)$,
the second is more natural from a certificate standpoint.

The quantum circuits we produce will
use gates only from $\{ X, CX \} \cup SU(2)$,
that is:
two-qubit gates $CX$
and single-qubit gates $X$ and $U(p,q)$
given by
\begin{align*}
	U(p,q) =
	\begin{pmatrix}
		p & -q^*\\
		q & p^*
	\end{pmatrix}
\end{align*}
for $p,q$ with $|p|^2 + |q|^2 = 1$.
For readability,
we will initially assume a gate set $\gateset$
that additionally includes $CCX$, $H$, $CH$, $CCH$, and $CU(p,q)$.
In \autoref{app:gate-set}
we show how to reduce this
to using just $X, CX$ and $U(p,q)$.

\subsection{Gate notation}
We write $C_{a_0}\dots C_{a_n}G_b$
for the quantum gate $G$ applied to qubit $b$
controlled on qubits $a_0, \dots, a_n$.
Negated controls are indicated with $\n a$,
so that $C_{\n a}G_b$ is given by $X_a \circ C_a G_b \circ X_a$.

\subsection{Circuit dimensions}
For classical circuits $D$,
the \emph{depth} of a gate $g$
is given by the length of the longest path from the output gate of $D$ to $g$.
The depth of $D$ is equal to that of its deepest gate.
The $i$th \emph{level} of a circuit is the set of its gates that have depth $i$.
The \emph{width} of a level is the number of gates in it,
and the width of a circuit is the width of its widest level.
We define the size $|D|$ of a circuit to be the size of its underlying graph:
the number of gates plus the number of wires.

For quantum circuits,
width is equal to the number of qubits the circuit targets.
Because two gates cannot act on the same qubit in parallel,
circuit depth depends on how conflicting gates are scheduled.
With $\depth(Q)$ we refer to the smallest possible depth
(of an optimal scheduling).

\section{Main results}\label{sec:main-results}
Our main technical result
(proved in sections~\ref{sec:impl}~to~\ref{sec:complexity})
is the following.

\begin{theorem}\label{thm:s2}
	For every normalized weighted 2-d-DNNF $D$ with $S_2(D)$ a state,
	there is a quantum circuit computing it
	with $O(|D|)$ gates,
	depth $O(\depth(D))$,
	and $2|\int(D)| - 2$ ancillas.
	Up to complex arithmetic,
	it can be described in linear time
	and in constant parallel time with $\poly(|D|)$ processors.
\end{theorem}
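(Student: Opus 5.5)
The plan is to build the circuit by walking down the 2-d-DNNF from the output gate. Each non-input gate $g$ other than the root gets one ancilla, an \emph{activation qubit} $a_g$. After the circuit has run, $a_g$ is $\ket 1$ exactly on those branches of the superposition whose certificate contains $g$. That accounts for $|\int(D)|-1$ ancillas; the remaining $|\int(D)|-1$ serve as one auxiliary qubit per gate, used to merge the two possible parents of a fan-in-2 gate. This gives the stated $2|\int(D)|-2$.

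The root is implicitly active. We then process gates level by level. For an AND-gate $g=g_1g_2$ with $g$ active, both children must be activated: a child that is a non-input gate has its activation qubit turned on by a controlled $CX$ from $a_g$. For an OR-gate $g=w_1g_1+w_2g_2$, we apply $CU(w_1,w_2)$ controlled on $a_g$ to a fresh qubit $b_g$. This splits the amplitude as $w_1\ket{0}+w_2\ket{1}$, and normalization makes $U$ unitary. We then activate $g_1$ with $C_{a_g}C_{\n{b_g}}X$ and $g_2$ with $C_{a_g}C_{b_g}X$. (A unary OR-gate just passes its activation and weight down.)

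The main difficulty is fan-in 2, where a gate $h$ can be activated from two parents. Uniqueness and tree-ness of certificates (\autoref{fact:tree-certs}) say that in any certificate at most one parent of $h$ selects $h$. So on every branch the two activation signals are mutually exclusive, and XOR-ing both into $a_h$ computes their OR.

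Literal leaves are handled by a similar XOR. When an active branch reaches leaf $x$, we set qubit $x$ to $\ket1$, and for $\n x$ we leave it at $\ket0$. Decomposability, together with the tree property, ensures that no variable is written twice on a single certificate branch. Variables not mentioned by the certificate have to end in $\ket+$. So for each variable $x$ we apply $H$ to $x$ controlled on ``no literal of $x$ active'', computed using a $CCH$ or negated controls on the at most two literal parents.

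The remaining obstacle is uncomputation. In the final state the activation and branch qubits must be disentangled, or else factored as a function of $x$ alone, since $S_2(D)$ lives only on the variable qubits. My first approach would be to uncompute bottom-up. Determinism implies that the certificate is determined by $x$, and hence so is each $a_g$, through the condition ``the subcircuit of $g$ is nonzero on $x$ and $g$ is on the certificate''. For OR-gates, $b_g$ can be reset by checking which child's literals are consistent with $x$. If direct resetting is too costly, I would instead leave the ancillas in a state $\ket{\phi(x)}$ that is a function of $x$, which is harmless up to a computable isometry. With that ancilla bookkeeping in place, correctness follows by induction on levels: the amplitude on certificate $\tau$ is the product of its edge weights, which is $\val(\tau)$.

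For the complexity claims, each gate and wire contributes $O(1)$ quantum gates, so the size is $O(|D|)$. Bounded fan-in and fan-out give $O(1)$ depth per level, hence depth $O(\depth(D))$. The gate list is a local function of each gate's neighbourhood, so it can be emitted in linear time, or in constant parallel time with one processor per gate.
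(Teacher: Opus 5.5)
\textbf{Gap: the ancillas are never returned to $\ket 0$.} Your propagation phase matches the paper's certificate layer: activation bits, a $U(p,q)$ split at OR-gates, and XOR-merging at fan-in-2 gates justified by the tree property. The gap is exactly where you place ``the remaining obstacle''. You give no working way to reset the activation qubits $a_g$ (or the branch qubits $b_g$), and the fallback fails. If the final state is $\sum_x \alpha_x \ket x \otimes \ket{\phi(x)}$ and $\phi$ is not constant on the support, the variable register is entangled with the ancillas and is not in the pure state $S_2(D)$. Removing $\phi(x)$ would need a circuit computing $\phi(x)$ from $x$, which is the problem itself. The ``direct'' route is circular as written, since ``$g$ is on the certificate'' is precisely what $a_g$ stores. ``Checking which child's literals are consistent with $x$'' means deciding whether the child's whole subcircuit is nonzero on $x$. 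That is not a local test: done separately per OR-gate it can cost $\Theta(|D|)$ gates plus workspace each, breaking both the size bound and the ancilla budget.

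\textbf{Missing idea: the paper's nonzero layer.} In one bottom-up pass, compute $n_g=[g(x)\neq 0]$ into an ancilla for every internal $g\neq r$, reading literals directly off the variable qubits. For $g=a\land b$ use $C_{n_a}C_{n_b}X_{n_g}$. For $g=a\lor b$ use $C_{n_a}X_{n_g}$ and $C_{n_b}X_{n_g}$; determinism makes at most one child nonzero, so OR equals XOR. Then reset bottom-up: apply $C_{c_g}X_{c_a}$ for AND (and unary OR) parents and $C_{c_g}C_{n_a}X_{c_a}$ for binary OR parents. Because the order is bottom-up, each $c_a$ has already served as a control before it is cleared. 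By the tree property every $a\in T(x)\setminus\{r\}$ has exactly one parent in $T(x)$. By determinism an OR-gate in $T(x)$ selected exactly its child with $n_a=1$. Hence each certificate qubit is flipped exactly once and nothing else is touched. Finally, uncompute the $n_g$ by reversing the pass. These $n_g$ are what the second batch of $|\int(D)|-1$ ancillas is for. Your count spends that batch on ``merge'' qubits, which you then show are unnecessary, and on the $b_g$, leaving nothing for this step. The paper needs no $b_g$: it rotates directly into the first child's activation qubit and activates the second child controlled on the first being $\ket 0$.

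\textbf{Smaller point: literal handling.} ``No literal of $x$ active'' depends on up to four parent gates, since $x$ and $\n x$ can each have fan-out two, and also on their selection bits. The paper avoids this test altogether. It initializes $v_x=\ket +$ and, when a literal is reached, applies $H$ (for $\n x$) or $H$ then $X$ (for $x$). This is sound because each certificate contains each variable at most once.
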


In \autoref{sec:reweighting},
we will show that weighted 2-d-DNNF $D$
where $S_2(D)$ is a state
can be normalized
while preserving the represented state,
and that weighted 2-d-DNNF $D$
where $S_1(D)$ is a state can be reweighted to obtain normalized weighted 2-d-DNNF $D'$ with $S_1(D) = S_2(D')$.
Both operations only modify edge weights
and can be performed in linear time up to complex arithmetic.
It then follows:

\begin{theorem}\label{thm:2-d-DNNF}
	For every weighted 2-d-DNNF $D$
	where $S_1(D)$ or $S_2(D)$ is a state,
	there is a quantum circuit computing it
	with $O(|D|)$ gates,
	depth $O(\depth(D))$,
	and $2|\int(D)| - 2$ ancillas.
	Up to complex arithmetic,
	it can be described in linear time.
\end{theorem}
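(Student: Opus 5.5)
The plan is to reduce \autoref{thm:2-d-DNNF} to \autoref{thm:s2} through the two reweighting operations announced just before the statement. These are proved in \autoref{sec:reweighting} and may be assumed here. In each case the aim is to produce a \emph{normalized} weighted 2-d-DNNF $D'$ with the same underlying graph as $D$ and with $S_2(D')$ equal to the target state. Then \autoref{thm:s2} applied to $D'$ yields the circuit. Because only edge weights change, we have $|D'| = |D|$, $\depth(D') = \depth(D)$ and $\int(D') = \int(D)$. Hence the bounds of $O(|D'|)$ gates, depth $O(\depth(D'))$ and $2|\int(D')|-2$ ancillas carry over verbatim to $D$.

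\emph{Case $S_2(D)$ is a state.} Apply the normalization operation to obtain $D'$, which is normalized and satisfies $S_2(D') = S_2(D)$. Two checks are needed. First, $D'$ is still a weighted 2-d-DNNF. Decomposability is purely syntactic, so it is preserved. Determinism is preserved because it concerns only the zero pattern of the products $w_i g_i(x)\, w_j g_j(x)$, and the reweighting keeps all weights nonzero and does not change which subcircuits vanish. The fan-in and fan-out bounds are structural, so they are also preserved. Second, $S_2(D')$ is a state, since it equals $S_2(D)$.

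\emph{Case $S_1(D)$ is a state.} Apply the second reweighting to obtain a normalized $D'$ with $S_2(D') = S_1(D)$. The same structural argument shows that $D'$ is a weighted 2-d-DNNF. Moreover $S_2(D')$ is a state because it equals $S_1(D)$. If both $S_1(D)$ and $S_2(D)$ are states, the two cases are handled separately, according to which of the two states is to be prepared.

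For the running time, both reweightings take linear time up to complex arithmetic by the result of \autoref{sec:reweighting}. The description in \autoref{thm:s2} is produced in linear time, so the total is $O(|D|)$. The only point requiring care is the claim that reweighting preserves determinism and nonzero weights. If the reweighting of \autoref{sec:reweighting} could set some weight to zero, for instance on a subcircuit of zero norm, I would first prune the gates and edges it eliminates. Pruning only decreases $|D|$, $\depth(D)$ and $|\int(D)|$, so the bounds still hold, except that the ancilla count must be stated in terms of the pruned circuit, which is no larger. This is the step where I expect the main, though minor, obstacle to lie.
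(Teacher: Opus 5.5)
Your proposal is correct and follows the paper's argument: obtain a normalized reweighting $D'$ via \autoref{lem:normalization} (using $|S_2(D)|=1$) or via \autoref{lem:reweighting}. Reweighting leaves the gate structure, and hence $|D|$, $\depth(D)$ and $\int(D)$, unchanged, so \autoref{thm:s2} applied to $D'$ gives the stated bounds. The pruning contingency you flag never arises, because in a constant-free weighted d-DNNF with nonzero weights no gate computes the zero function (by induction, using decomposability at $\times$-gates and determinism at $+$-gates); hence every $|S_2(c_i)|>0$ and $K_{c_i}>0$, and all new weights stay nonzero.
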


There are many ways to convert arbitrary d-DNNF to 2-d-DNNF
with different (asymptotic) size/depth/width trade-offs.
One choice gives:

\begin{corollary}\label{cor:main}
	For every weighted d-DNNF $D$ with $S_1(D)$ or $S_2(D)$ a state,
	there is a quantum circuit computing it
	with $O(|D|)$ gates,
	width $O(|D|)$,
	and depth $O(\depth(D)\log |D|)$.
	Up to complex arithmetic,
	it can be described in linear time.
\end{corollary}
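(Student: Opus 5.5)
The corollary follows from \autoref{thm:2-d-DNNF} once the weighted d-DNNF $D$ is converted into an equivalent weighted 2-d-DNNF $D'$. The conversion must satisfy three conditions: $|D'| = O(|D|)$, $\depth(D') = O(\depth(D)\log|D|)$, and it runs in linear time up to complex arithmetic. It must also preserve the represented state, for both $S_1$ and $S_2$. Given such a $D'$, \autoref{thm:2-d-DNNF} gives a circuit with $O(|D'|)=O(|D|)$ gates, depth $O(\depth(D'))=O(\depth(D)\log|D|)$, and $2|\int(D')|-2 = O(|D|)$ ancillas. The width is then $O(|D|)$, because the data qubits number $|\var(D)|\le|D|$. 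The description time is the linear conversion plus the linear construction.

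The conversion proceeds in three local steps.

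\emph{Fan-in.} Each gate with fan-in $k>2$ is replaced by a balanced binary tree of depth $\lceil\log k\rceil$.
\begin{itemize}
\item For a $\times$-gate, the tree is made of binary $\times$-gates. Decomposability is kept, since the children of every internal node have disjoint variable sets.
\item For a $+$-gate $\sum_i w_ig_i$, the original weights go on the leaf-level edges and the internal edges get weight $1$. Determinism is kept, since each internal node sums a subset of mutually orthogonal terms.
\end{itemize}

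\emph{Fan-out.} A gate with fan-out $k>2$ gets a balanced binary tree of unary $+$-gates of weight $1$ above it, with the $k$ parents attached at the leaves. This keeps fan-out at most $2$ and adds depth $O(\log k)$.

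\emph{Unary AND-gates.} These are replaced by unary OR-gates of weight $1$.

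None of these steps changes the function computed, so $S_1$ is unchanged. For $S_2$, certificates of $D'$ correspond bijectively to certificates of $D$, with the same variable set and the same value, so $m_D(x)$ is unchanged as well. Each step is local, and the total number of added gates and wires is proportional to the sum of fan-ins and fan-outs, which is $O(|D|)$.

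\emph{Depth bound.} This is the step that needs care. A root-to-leaf path in $D$ has at most $\depth(D)$ edges. Each original gate on the path is expanded into at most two trees, a fan-in tree and a fan-out tree. Each tree has depth $O(\log|D|)$, since every fan-in and fan-out is at most $|D|$. Summing along the path gives $\depth(D') = O(\depth(D)\log|D|)$.

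The subtle point is that fan-out trees could create paths that do not correspond to original ones. They do not, because every path in $D'$ projects onto a path in $D$. The tree depth per gate is only $O(\log(\text{degree}))$, so one could even get a sharper bound, but the stated bound suffices.

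Finally, \autoref{thm:2-d-DNNF} requires only that $S_1(D')$ or $S_2(D')$ be a state, and that holds because both vectors are the same as for $D$. The reweighting in \autoref{sec:reweighting} is then applied internally by \autoref{thm:2-d-DNNF}, which completes the proof.
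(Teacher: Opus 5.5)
Your proposal is correct and follows the paper's route. You convert $D$ in linear time into an equivalent weighted 2-d-DNNF of size $O(|D|)$ and depth $O(\depth(D)\log|D|)$, then apply \autoref{thm:2-d-DNNF}. The paper only asserts that such a conversion exists, so your explicit fan-in and fan-out trees, and your check that certificates, variable sets, and hence $m_D$ and $S_2$ are preserved, supply the details it leaves implicit.
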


In \autoref{app:gate-set},
it is proved that all circuits mentioned above can be made to only use gates from the gate set
$\{ X, CX \} \cup SU(2)$.

\section{Construction overview}\label{sec:overview}
In \autoref{sec:reweighting},
we show that preparing $S_1(D')$ for weighted 2-d-DNNF $D'$
reduces efficiently to preparing states $S_2(D)$ for normalized weighted 2-d-DNNF $D$.
Here we give an overview of the quantum circuit $Q(D)$ preparing $S_2(D)$ for such normalized weighted 2-d-DNNF $D$.

There are qubits $c_g$ for every internal gate and $n_g$ for every internal gate except the output gate,
as well as qubits $v_x$ for every variable $x$.
Initially, $c_r = \ket 1$,
where $r$ is the root (output) gate,
$v_x = \ket +$ for all $x$,
and $c_g = \ket 0$ for all $g \neq r$.

The circuit roughly consists of three layers:
the certificate propagation layer,
the nonzero layer,
and the reset layer.

After the certificate propagation layer,
the $c_g$ and $v_x$ wires will be in superposition over all certificates,
weighted according to $D$.
Initially, with $c_r = \ket 1$ and the $v_x$ set to $\ket +$,
it is a superposition over all partial certificates of depth 1.
It will be expanded to partial certificates of higher depth as follows:
\begin{itemize}
	\item
		For every AND-gate $g = a \land b$
		there are quantum gates $C_{c_g}X_{c_a}, C_{c_g}X_{c_b}$,
		setting $c_a,c_b$ to $\ket 1$ if $c_g$ is.
		This corresponds with the rule that if an AND-gate is in a certificate,
		then all of its children are also in the certificate.
	\item
		For OR-gates $g = a \lor b$ with weights $p,q$,
		there are gates $C_{c_g}U_{c_a}(q,p)$
		and $C_{c_g}C_{\n{c_a}}X_{c_b}$,
		setting $c_a c_b$ to $\ket {10}$ with probability $|p|^2$
		and to $\ket {01}$ otherwise.
		This corresponds with the rule that if an OR-gate is in a certificate,
		then exactly one of its children is also in the certificate.
		So the partial certificate ``splits'' here into two partial certificates of higher depth,
		with their relative weight determined by the weights on the OR-gate.
\end{itemize}
This continues all the way to the $v_x$-wires.
Here, extra care has to be taken
for the initial $\ket +$
and negative literals.
Now the $c_g,v_x$ qubits together give the proper distribution over the certificates
and corresponding assignments.

It remains to reset the $c_g$ to $\ket 0$.
In preparation,
in the nonzero layer,
the circuit first computes for each assignment $\alpha$ appearing in the $v_x$
the gates $g$ in $D$ that are nonzero under that assignment,
setting $n_g = \ket 1$ if $g(\alpha) \neq 0$ and $n_g = \ket 0$ otherwise.
This is achieved by quantum gates $C_{n_a}C_{n_b}X_{n_g}$
for $g = a \land b$
and $C_{n_a}X_{n_g}, C_{n_b}X_{n_g}$ for $g = a \lor b$,
where the $n_a,n_b$ are replaced with (negated) $v_x,v_y$
in case the gate takes literal inputs.

In the reset layer,
with the $n_g$ in hand,
the $c_a$ can be reset as follows,
from bottom to top:
for all $g = \land_i a_i$
there are $C_{c_g}X_{c_{a_i}}$,
corresponding with the rule that if an AND-gate is in a certificate then all its children are,
and for all $g = \lor_i a_i$
there are $C_{c_g}C_{n_{a_i}}X_{c_{a_i}}$,
corresponding with the rule that if an OR-gate is in a certificate then exactly one of its children is,
namely the one that is nonzero under any of the certificate's assignments.

Finally,
the $n_g$ are reset by inverting their computation
(since apart from their computation,
they are only used as controls).

\begin{figure}
\begin{tikzpicture}[
	gt/.style={draw,circle,minimum size=7mm,inner sep=0pt},
	lit/.style={draw,rectangle,rounded corners,minimum size=6mm,inner sep=2pt},
	wt/.style={font=\tiny},
	every label/.style={font=\footnotesize}
]
	\node[gt,label=above:{$r=g_0$}] (g0) at (0,0)    {$\lor$};
	\node[gt,label=left:{$g_1$}]  (g1) at (-1,-1) {$\land$};
	\node[gt,label=right:{$g_2$}] (g2) at (1,-1)  {$\land$};
	\node[lit] (x0)  at (-2,-2) {$x_0$};
	\node[lit] (x1)  at (0,-2) {$x_1$};
	\node[gt,label=right:{$g_3$}] (g3) at (2,-2)  {$\land$};
	\node[lit] (nx0)  at (1.5,-3.1) {$\n{x_0}$};
	\node[lit] (x2)  at (2.5,-3.1) {$x_2$};

	\draw
		(g0) -- node[wt,left,yshift=3,xshift=3] {$1/\sqrt 5$} (g1)
		(g0) -- node[wt,right,yshift=3,xshift=-3] {$2/\sqrt 5$} (g2)
		(g1) -- (x0)
		(g1) -- (x1)
		(g2) -- (x1)
		(g2) -- (g3)
		(g3) -- (nx0)
		(g3) -- (x2);
\end{tikzpicture}
\caption{
	A normalized weighted 2-d-DNNF $D$
	with two certificates.
	$S_2(D) = 1/\sqrt{5} \ket{11+} + 2/\sqrt 5 \ket{011}$.
	(The $S_2$-state of $D$ is the $S_1$-state of the weighted d-DNNF in \autoref{fig:example-2-d-DNNF}.)
}
\label{fig:example-D}
\end{figure}

\begin{example}
\begin{figure}
\centering
\begin{adjustbox}{max width=\textwidth}
\begin{quantikz}[row sep={20pt, between origins},column sep={4pt}]
	\lstick{$c_{g_0}=\ket1$} & \ctrl{1} & \ctrl{2} & \qw & \qw & \qw & \qw & \qw & \qw & \qw & \qw & \qw & \qw\slice[style=gray]{} & \qw & \qw & \qw\slice[style=gray]{} & \qw & \ctrl{1} & \ctrl{2} & \gate{X} & \qw & \qw & \qw & \qw \\
	\lstick{$c_{g_1}=\ket0$} & \gate{U(\frac{2}{\sqrt5},\frac{1}{\sqrt5})} & \octrl{1} & \ctrl{6} & \ctrl{6} & \ctrl{7} & \ctrl{7} & \qw & \qw & \qw & \qw & \qw & \qw & \qw & \qw & \qw & \qw & \targ{} & \qw & \qw & \qw & \qw & \qw & \qw \\
	\lstick{$c_{g_2}=\ket0$} & \qw & \targ{} & \qw & \qw & \qw & \qw & \ctrl{6} & \ctrl{6} & \ctrl{1} & \qw & \qw & \qw & \qw & \qw & \qw & \ctrl{1} & \qw & \targ{} & \qw & \qw & \qw & \qw & \qw \\
	\lstick{$c_{g_3}=\ket0$} & \qw & \qw & \qw & \qw & \qw & \qw & \qw & \qw & \targ{} & \ctrl{4} & \ctrl{6} & \ctrl{6} & \qw & \qw & \qw & \targ{} & \qw & \qw & \qw & \qw & \qw & \qw & \qw \\
	\lstick{$n_{g_1}=\ket0$} & \qw & \qw & \qw & \qw & \qw & \qw & \qw & \qw & \qw & \qw & \qw & \qw & \targ{} & \qw & \qw & \qw & \ctrl{-3} & \qw & \qw & \qw & \qw & \targ{} & \qw \\
	\lstick{$n_{g_2}=\ket0$} & \qw & \qw & \qw & \qw & \qw & \qw & \qw & \qw & \qw & \qw & \qw & \qw & \qw & \qw & \targ{} & \qw & \qw & \ctrl{-3} & \qw & \targ{} & \qw & \qw & \qw \\
	\lstick{$n_{g_3}=\ket0$} & \qw & \qw & \qw & \qw & \qw & \qw & \qw & \qw & \qw & \qw & \qw & \qw & \qw & \targ{} & \ctrl{-1} & \qw & \qw & \qw & \qw & \ctrl{-1} & \targ{} & \qw & \qw \\
	\lstick{$v_{x_0}=\ket+$} & \qw & \qw & \gate{H} & \targ{} & \qw & \qw & \qw & \qw & \qw & \gate{H} & \qw & \qw & \ctrl{-3} & \octrl{-1} & \qw & \qw & \qw & \qw & \qw & \qw & \octrl{-1} & \ctrl{-3} & \qw \\
	\lstick{$v_{x_1}=\ket+$} & \qw & \qw & \qw & \qw & \gate{H} & \targ{} & \gate{H} & \targ{} & \qw & \qw & \qw & \qw & \ctrl{-4} & \qw & \ctrl{-3} & \qw & \qw & \qw & \qw & \ctrl{-3} & \qw & \ctrl{-4} & \qw \\
	\lstick{$v_{x_2}=\ket+$} & \qw & \qw & \qw & \qw & \qw & \qw & \qw & \qw & \qw & \qw & \gate{H} & \targ{} & \qw & \ctrl{-3} & \qw & \qw & \qw & \qw & \qw & \qw & \ctrl{-3} & \qw & \qw
\end{quantikz}
\end{adjustbox}
\caption{
	The quantum circuit $Q(D)$ preparing $S_2(D)$
	for the 2-d-DNNF $D$ from \autoref{fig:example-D}.
	The dashed lines show the boundaries between
	the certificate propagation,
	nonzero,
	and reset layers.
}
\label{fig:example-Q(D)}
\end{figure}

Consider the normalized 2-d-DNNF $D$ from \autoref{fig:example-D}.
The quantum circuit $Q(D)$ preparing $S_2(D)$ is displayed in \autoref{fig:example-Q(D)}.
It first applies the $P_{g_i}$ in the topological order $g_0,g_1,g_2,g_3$
(their subgates are the ones controlled on $c_{g_i}$).
The first dashed line indicates the start of the nonzero layer.
In the reset layer, starting after the second dashed line,
first the $R_g$ are applied.
Then the $n_g$ are reset
simply by inverting the nonzero layer.
It is important to note that,
even though there are multiple $CH$ gates applied to $v_{x_1}$,
in every basis state of nonzero amplitude at most one is fired,
since every (partial) certificate contains each variable at most once.
\end{example}

\section{The state preparation algorithm}\label{sec:impl}
\newcommand{\I}{\mathcal I}
Let $D = (G, E)$ be a normalized
weighted 2-d-DNNF so that $S_2(D)$ is a quantum state.
We describe how to construct a quantum circuit $Q(D)$ preparing
(with some ancillas) state $S_2(D)$.

Let $r \in G$ be the output gate of $D$.
Write $\I$ for the internal gates $\int(D)$.

There is an ancilla qubit $c_g$ for every $g \in \I$,
an additional ancilla $n_g$ for every $g \in \I \setminus \{ r \}$,
and finally there are qubits $v_x$ for every $x \in \var(D)$.
Initially, set $c_r = \ket 1$,
$v_x = \ket +$ for all $x \in\var(D)$,
and $c_g = \ket 0$ for all other $g$.
Set all $n_g$ to $\ket 0$.

The circuit $Q(D)$ consists of three layers:
\begin{enumerate}
	\item The certificate layer,
		where $D$ is uncomputed top-to-bottom to find the certificates,
		storing them on the $c_g$ wires
	\item The nonzero layer,
		where $D$ is computed bottom-to-top to find nonzero gates
		under given assignments,
		storing them on the $n_g$ wires
	\item The reset layer,
		where the $n_g$ are used to reset the internal $c_g$,
		and then the $n_g$ are reset by inverting layer (2).
\end{enumerate}

We now describe how the layers are implemented.
\begin{enumerate}
\item 
	Layer (1) consists of subcircuits $P_g$ for $g \in \I$.
	Let $F_a$ be $H_{v_x}$ if $a$ is $\n x$;
	$(X \circ H)_{v_x}$ if $a$ is $x$;
	and $X_{c_a}$ otherwise.
	The $P_g$ are as follows.
	\begin{itemize}
		\item If $g$ is an AND-gate $a \land b$
			then $P_g$ consists of
			$C_{c_g}F_a$ and $C_{c_g}F_b$.
		\item If $g$ is an OR-gate $\lor_{i < m} a_i$
			with weights $w_i$
			and $m \leq 2$,
			then $P_g$ is as follows.
			If $m = 1$ then let $w_1 = 0$.
			Assume without loss of generality
			that the $a_i$ are sorted so that internal gates come before literals
			and positive literals come before negative ones.

			If $g$ is $x \lor \n x$
			then $P_g$ consists of $C_{c_g}H_{v_x}$
			followed by $C_{c_g}U_{v_x}(w_1,w_0)$.

			Otherwise do the following.
			If $a_0$ is a literal for variable $x$,
			then first apply $C_{c_g}H_{v_x}$.
			% set to 0
			Apply $C_{c_g}U_{a_0}(w_1,w_0)$,
			where $a_0$ refers to $v_x$ if $a_0$ is a literal for $x$
			and $c_{a_0}$ otherwise.
			% now we have p|1> + q|0> in a
			Apply $C_{c_g}C_{\n {a_0}}F_{a_1}$ if $m = 2$.
			% now b is p|0> + q|1>
			Finally, if $a_0$ is $\n x$,
			apply $C_{c_g}X_{v_x}$.
	\end{itemize}

\item
	Layer (2) computes $D(x) \neq 0$ into the $n_g$,
	where $x$ is given by the $v_x$.
	That is, it consists of subcircuits $N_g$
	where $N_g$ is
	$C_{n_a}X_{n_g}$ if $g$ is a unary OR-gate with input $a$,
	$C_{n_a}C_{n_b}X_{n_g}$ if $g$ is of the form $a \land b$,
	and $C_{n_a}X_{n_g}$ followed by $C_{n_b}X_{n_g}$ if $g$ is $a \lor b$,
	where the $n_a,n_b$ are substituted with $v_x$ where $a,b$ is a variable $x$,
	and negated control on $v_x$ where $a,b$ is $\n x$.
\item
	Layer (3) consists of reset subcircuits $R_g$
	where the $n_a$ are only used as controls,
	followed by an $X$-gate on $c_r$,
	and finally by an inversion of layer (2) to reset the $n_a$.
	The $R_g$ are as follows.
	\begin{itemize}
		\item For unary OR-gates and AND-gates $g$
			with children $a_i$,
			$R_g$ consists of all
			$C_{c_g}X_{c_{a_i}}$ with $a_i \in \I$.
		\item For $g = a_0 \lor a_1$,
			$R_g$ consists of all $C_{c_g}C_{n_{a_i}}X_{c_{a_i}}$
			with $a_i \in \I$.
	\end{itemize}
\end{enumerate}

The $P_g$ are applied from top to bottom
and the $N_g$ and $R_g$ from bottom to top.
(More precisely:
$P_g$ is scheduled before all $P_{g'}$ for gates $g'$ that $g$ depends on,
and the $N_g,R_g$ respectively
are scheduled after $N_{g'},R_{g'}$ for gates $g'$ that $g$ depends on.
Gates acting on disjoint qubits commute,
so independent gates are applied in parallel,
effectively following the structure of $D$.)
Write $Q(D)$ for the quantum circuit obtained from $D$ by the description above.

\section{Correctness}\label{sec:correctness}
We now prove that $Q(D)$ enters $S_2(D)$ as desired.
Fix some topological order $r = g_0 \prec \dots \prec g_{|G| - 1}$ on the gates $G$ of $D$.

A \emph{precertificate} $\rho$ of $D$
is a subcircuit of $D$ obtained by starting with $\rho = \{ r \}$
and then iteratively adding both children of an AND-gate in $\rho$ to it,
or adding exactly one of the children of an OR-gate in $\rho$ to it (and never the other).
More formally,
it can be seen as a connected subcircuit of $D$ that contains $r$,
and for every AND-gate it contains it must either contain none or all of its children,
and for every OR-gate none or exactly one of its children.
(Weights on wires are inherited as well.)
The maximal precertificates of $D$ are exactly its certificates.

For a precertificate $\rho$,
define $\val(\rho)$ to be the value of $\rho$ after replacing all its leaf gates with 1,
or equivalently the product of all weights on the internal wires of $\rho$.
This agrees with the earlier definition of $\val(\rho)$ if $\rho$ is a certificate.

Define $\psi(\rho)$
to be the vector where $\psi(\rho)_{c_g}$ is $\ket 1$ if $g \in \rho$ and $\ket 0$ otherwise,
and $\psi(\rho)_{v_x}$ is $\ket 1$ if $x \in \rho$,
$\ket 0$ if $\n x \in \rho$,
and $\ket +$ otherwise.

Let $\pre_h(D)$ be the set of maximal precertificates
of $D$ restricted to gates $g \preceq h$ and their children.

\begin{lemma}\label{lem:p-layer}
	After applying all $P_g, g \preceq h$,
	$Q(D)$ is in state
	$\sum_{\rho \in \pre_h(D)}\val(\rho)\psi(\rho)$.
\end{lemma}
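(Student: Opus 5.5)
We argue by induction along the fixed topological order $r = g_0 \prec g_1 \prec \dots$. For the base case we take ``no $P_g$ applied yet''. The relevant precertificate set is then just $\{r\}$, with $\val(\{r\}) = 1$, and $\psi(\{r\})$ has $c_r = \ket 1$, all other $c_g = \ket 0$ and all $v_x = \ket +$, which is exactly the initial state. (Equivalently, we start from $h = g_0$ after a direct check of $P_r$; this check is an instance of the inductive step below.) For the inductive step, suppose the state after all $P_{g'}$ with $g' \prec h$ is $\sum_{\rho} \val(\rho)\psi(\rho)$, the sum running over maximal precertificates restricted to gates $g' \prec h$ and their children. We then apply $P_h$ and treat each basis-branch $\psi(\rho)$ separately, using linearity. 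Every gate of $P_h$ is controlled on $c_h$, so the branches with $h \notin \rho$ are left unchanged. These are also exactly the precertificates that do not extend when $h$ becomes admissible.

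Now take a branch with $h \in \rho$. By maximality and the topological order, no child of $h$ has been added to $\rho$ yet, and no $P$ acting on a child of $h$ has been applied. The key claim here is that each child $a$ of $h$ is still in its initial state in $\psi(\rho)$: $c_a = \ket 0$ if $a$ is internal, and $v_x = \ket +$ if $a$ is a literal of $x$. For internal $a$ this holds because $c_a$ is only targeted by $P_{g}$ for parents $g$ of $a$. By Fact~\ref{fact:tree-certs} and decomposability, at most one parent of $a$ lies in $\rho$; since that parent is $h$, the $P_g$ of the other parents were controlled off in this branch. For literal children the same tree property applies: any two occurrences of $x$ in a precertificate lie in a common ancestor's subtree, which decomposability forbids under AND-gates and which OR-gates cannot produce since they select only one child. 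So $v_x$ has not yet been touched in this branch.

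With the children in their initial states, we verify case by case that $P_h$ maps the branch to $\sum_{\rho'} w(\rho',\rho)\,\psi(\rho')$, the sum over the one-step extensions $\rho'$ of $\rho$ at $h$, where $w$ is the weight of the chosen edge (and $w = 1$ for AND-gates). For an AND-gate, $F_a$ takes $\ket 0 \mapsto \ket 1$ on $c_a$, $\ket + \mapsto \ket 0$ for $\n x$ via $H$, and $\ket + \mapsto \ket 1$ for $x$ via $X\circ H$. For an OR-gate, the leading $H$ first moves a literal child to $\ket 0$. Then $U(w_1,w_0)\ket 0 = w_1\ket 0 + w_0\ket 1$, and the gate $C_{\n{a_0}}F_{a_1}$ turns the $\ket 0$-branch into ``$a_1$ selected''. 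A final $X$ fixes the encoding when $a_0$ is $\n x$, and the special case $x \lor \n x$ is handled with $H$ followed by $U$. Multiplying by $\val(\rho)$ gives $\val(\rho')$, and summing over branches yields $\sum_{\rho\in\pre_h(D)}\val(\rho)\psi(\rho)$.

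We expect the main obstacle to be the ``children untouched'' claim for literals and shared subgates. This is where decomposability and the tree property must be invoked precisely, including the case where $a_0$ and $a_1$ are literals of the same or of different variables. The case bookkeeping for the OR-gate sign conventions is routine but fiddly.
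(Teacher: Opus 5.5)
Your proposal is correct and takes essentially the same route as the paper: induction along the topological order. In that induction $P_h$ acts trivially on branches with $h\notin\rho$, and for $h\in\rho$ it maps $\psi(\rho)$ to the weighted sum of its one-step extensions in the AND, unary-OR and binary-OR cases. You are more careful than the paper on two points: you justify via the tree property that the qubits of $h$'s children are still in their initial states, and you use the initial state $\psi(\{r\})$ as base case. The one loose end you flag, an OR-gate over literals of two different variables, cannot occur, because such a gate violates determinism.
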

\begin{proof}
	By induction on $h$.
	It holds for $h = r$.
	(It helps to assume here without loss of generality that $r$ is a unary OR-gate of weight 1.)
	Assume it holds for all $h' \prec h$.
	Let $g$ be the predecessor of $h$ in the topological order,
	and let $\rho \in \pre_g(D)$.

	If $h \notin \rho$
	then $\rho$ is also in $\pre_h(D)$,
	and $P_h$ does not act on $\psi(\rho)$.
	If $h \in \rho$,
	there are three cases:
	\begin{itemize}
		\item $h$ is an AND-gate.
			Then $\rho$ is not in $\pre_h(D)$
			but its unique extension $\rho'$
			with exactly the children of $h$ added to it is an element of $\pre_h(D)$.
			$P_h$ acts on $\psi(\rho)$
			and turns it into $\psi(\rho')$ as desired.
		\item $h$ is a unary OR-gate with weight $w$.
			Then $\rho$ is not in $\pre_h(D)$ but its unique extension $\rho'$
			with exactly the child of $h$ added to it is an element of $\pre_h(D)$.
			$P_h$ acts on $\psi(\rho)$
			and turns it into $w\psi(\rho')$ as desired.
			($\val(\rho') = w\val(\rho)$.)
		\item $h$ is an OR-gate $a_0 \lor a_1$ with weights $w_0,w_1$.
			Then $\rho$ is not in $\pre_h(D)$
			but its two extensions $\rho_0, \rho_1$ are,
			with the single added gates $a_0,a_1$ respectively.
			$P_h$ acts on $\psi(\rho)$
			and turns it into
			$w_0\psi(\rho_0) + w_1\psi(\rho_1)$
			as desired.
			\qedhere
	\end{itemize}
\end{proof}

\begin{corollary}\label{cor:p-layer}
	After the first layer,
	$Q(D)$ is in state
	$\sum_{\tau \in \cert(D)} \val(\tau) \psi(\tau)$.
\end{corollary}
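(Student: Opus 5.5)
The plan is to apply \autoref{lem:p-layer} with $h$ equal to the last gate $g_{|G|-1}$ in the topological order. After the first layer, all $P_g$ for $g \in \I$ have been applied. $P_g$ for input gates $g$ is empty, so this is the same as having applied all $P_g$ with $g \preceq g_{|G|-1}$. The lemma then gives the state $\sum_{\rho \in \pre_{g_{|G|-1}}(D)} \val(\rho)\psi(\rho)$. It remains to identify $\pre_{g_{|G|-1}}(D)$ with $\cert(D)$.

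When $h$ is the last gate, the restriction ``to gates $g \preceq h$ and their children'' is no restriction at all. So $\pre_h(D)$ is exactly the set of maximal precertificates of $D$. By the remark after the definition of precertificates, the maximal precertificates of $D$ are exactly its certificates. I would justify this remark briefly in two steps.

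\begin{itemize}
\item A maximal precertificate $\rho$ contains both children of every AND-gate in it and exactly one child of every OR-gate in it. Otherwise the iterative process could extend $\rho$, since every internal gate has at least one child.
\item Hence $\rho$ is a subcircuit that includes $r$ and has all leaves literals. Its structure matches the description of a certificate, except possibly for the requirement that it be nonzero under some assignment $x$. That requirement is the one delicate point.
\end{itemize}

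The main obstacle is that a maximal precertificate could in principle contain both $x$ and $\n x$ as leaves. Such a $\rho$ would evaluate to zero under every assignment and so would not be a certificate. I would rule this out as follows.

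\begin{itemize}
\item If $x, \n x \in \rho$, take their lowest common ancestor in $\rho$, viewing $\rho$ as built by the iterative process.
\item If the branching at that gate happens at an AND-gate, decomposability is violated, because both children would contain $x$ in their variable sets.
\item The branching cannot happen at an OR-gate, because $\rho$ takes only one child of an OR-gate.
\item One should also check that $\rho$ is a tree, since a DAG-shaped $\rho$ complicates the notion of a common ancestor. Shared subgates below two children of an AND-gate again contradict decomposability, because those children would then share variables.
\end{itemize}

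So $\rho$ is a tree with consistent literals. Any $x$ agreeing with the literals of $\rho$ makes $\rho$ nonzero, because weights are nonzero. Therefore $\rho$ is the certificate $T(x)$, and conversely every certificate is a maximal precertificate. Since the values $\val$ agree on certificates, substituting gives the claimed state $\sum_{\tau \in \cert(D)} \val(\tau)\psi(\tau)$.
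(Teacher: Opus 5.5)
Your proposal is correct and follows the paper's route: the corollary is Lemma~\ref{lem:p-layer} applied at the last gate of the topological order, together with the remark that maximal precertificates are exactly the certificates, which the paper states without proof and which you justify correctly via decomposability. One small point: the tree check is not needed for this identification, since any path from $r$ to $x$ and any path from $r$ to $\n x$ inside $\rho$ already diverge at an AND-gate; where you do use the tree property, it silently relies on $D$ being constant-free, so that every shared subgate has nonempty $\var$.
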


Next we note that the $n_g$ capture the nonzero gates.
Define $\nu(x)$ by
$\nu(x)_{n_g}$ being $\ket 1$
if the gate $g$ is nonzero in the circuit $D(x)$
and $\ket 0$ otherwise.

\begin{lemma}\label{lem:n-layer}
	After the second layer,
	$Q(D)$ is in state 
	\begin{align*}
		\sum_{x} \frac{D(x)}{\sqrt{m_D(x)}} \ket x
		\otimes \psi(T_D(x))_{c_g : g \in \I}
		\otimes \nu(x).
	\end{align*}
\end{lemma}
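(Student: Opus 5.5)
We start from \autoref{cor:p-layer}: after the first layer the state is $\sum_{\tau\in\cert(D)}\val(\tau)\psi(\tau)$, with all $n_g=\ket 0$. The plan is to rewrite this state in the computational basis on the $v_x$ wires, and then to show that the second layer acts on each resulting basis term by writing $\nu(x)$ into the $n_g$ wires.

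\textbf{Rewriting the first-layer state.} Expand each $\psi(\tau)$ on the $v_x$ wires. For $x\notin\var(\tau)$ the qubit $v_x$ is in $\ket +=(\ket0+\ket1)/\sqrt2$. So $\psi(\tau)$ equals $\sum_{\alpha}\frac{1}{\sqrt{m_D(\tau)}}\ket\alpha\otimes\psi(\tau)_{c_g:g\in\I}$. The sum runs over the $m_D(\tau)=2^{|\var(D)|-|\var(\tau)|}$ assignments $\alpha$ agreeing with the literals of $\tau$. These are exactly the assignments $\alpha$ with $T_D(\alpha)=\tau$:
\begin{itemize}
\item such an $\alpha$ makes every literal leaf of $\tau$ equal to $1$;
\item by \autoref{fact:tree-certs} ($\tau$ is a tree, so no variable appears with both signs) $\tau$ is then nonzero under $\alpha$;
\item uniqueness of certificates gives $T_D(\alpha)=\tau$;
\item conversely, if $T_D(\alpha)=\tau$ then $\alpha$ satisfies all literals of $\tau$.
\end{itemize}
Hence the sets $\{\alpha : T_D(\alpha)=\tau\}$ partition $\{\alpha : D(\alpha)\neq 0\}$. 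Since $D(\alpha)=\val(T_D(\alpha))$, regrouping gives $\sum_x \frac{D(x)}{\sqrt{m_D(x)}}\ket x\otimes\psi(T_D(x))_{c_g:g\in\I}\otimes\ket{0\cdots0}_{n}$. Assignments with $D(x)=0$ contribute zero amplitude, so the sum may run over all $x$.

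\textbf{The second layer.} Layer (2) uses the $v_x$ only as controls and never touches the $c_g$, and it is a permutation of basis states. By linearity it suffices to show that on $\ket x\otimes\ket{0\cdots0}_n$ it produces $\ket x\otimes\nu(x)$. We argue by induction bottom-to-top along the scheduling, with the hypothesis that when $N_g$ is applied, each $n_a$ for a child $a$ already holds $[a(x)\neq0]$. For literal children the substituted control $v_x$ or $\n{v_x}$ plays this role directly.
\begin{itemize}
\item \emph{AND-gate $g=a\land b$.} $g(x)=a(x)b(x)$ is nonzero iff both factors are, and the Toffoli $C_{n_a}C_{n_b}X_{n_g}$ computes exactly this.
\item \emph{Unary OR-gate.} The weight is nonzero by assumption, so $g(x)\neq0$ iff $a(x)\neq0$, matching $C_{n_a}X_{n_g}$.
\item \emph{Binary OR-gate $g=w_0a\lor w_1b$.} The two CNOTs compute $[a(x)\neq0]\oplus[b(x)\neq0]$. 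We must show this equals $[g(x)\neq0]$. Determinism says $w_0a(x)\cdot w_1b(x)=0$, and the weights are nonzero, so $a(x)$ and $b(x)$ are not both nonzero. The XOR is therefore an OR, and $g(x)=w_0a(x)+w_1b(x)$ is nonzero iff exactly one summand is nonzero, since there is no cancellation.
\end{itemize}
Each $n_g$ is targeted only in $N_g$ and starts at $\ket 0$, so it ends holding $[g(x)\neq 0]$. This gives $\nu(x)$.

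\textbf{Main obstacle.} The delicate step is the binary OR case. There, XOR-ing into $n_g$ is correct only because determinism together with nonzero weights rules out both children being nonzero. A smaller point is to handle negated literal controls correctly in the induction, which is just the definition of $\n x$ evaluating to $1-x$. The regrouping in the first step is bookkeeping via \autoref{fact:tree-certs}.
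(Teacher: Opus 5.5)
Your proof is correct and follows the paper's route. You rewrite the first-layer state of \autoref{cor:p-layer} in the computational basis on the $v_x$ wires, which is the identity $S_2(D)=\sum_{\tau}\val(\tau)\chi(\tau)$ carried along with the $c_g$ wires, and you then check that the $N_g$ write $\nu(x)$; the paper dismisses this check as ``Boolean arithmetic'', while you spell it out, including the point that determinism with nonzero weights makes the two CNOTs of a binary OR act as a logical OR. One minor quibble: that $\tau$ contains no complementary literals does not follow from $\tau$ being a tree alone. It follows from the definition of a certificate (it evaluates to a nonzero product of weights and literals under some assignment), or from decomposability.
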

\begin{proof}
	At the end of the first layer,
	$Q(D)$ is in state
	\begin{align*}
		\sum_{x} \frac{D(x)}{\sqrt{m_D(x)}} \ket x
		\otimes \psi(T_D(x))_{c_g : g \in \I}
		\otimes \ket 0^{\otimes|\I \setminus \{ r \}|}.
	\end{align*}
	That the subcircuits $N_g$ now act as desired
	follows from Boolean arithmetic.
\end{proof}

\begin{lemma}\label{lem:r-layer}
	The third layer resets the $c_g$ and $n_g$ to $\ket 0$,
	and leaves the $v_x$ untouched.
\end{lemma}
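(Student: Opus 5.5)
By linearity it suffices to fix a basis term from \autoref{lem:n-layer}, namely an assignment $x$ with $D(x)\neq 0$ together with $\psi(T(x))_{c_g}$ and $\nu(x)$. Write $\tau = T(x)$. On this basis state the $v_x$ qubits and the $n_g$ qubits are only ever used as controls by the $R_g$ and the final $X_{c_r}$, so they stay fixed. The inversion of layer (2) then maps $\nu(x)$ back to $\ket 0$, because layer (2) is a product of classical reversible gates controlled on the unchanged $v_x$. This also shows that the $v_x$ are untouched. What remains is to show that the $R_g$, followed by $X_{c_r}$, map $\psi(\tau)_{c_g}$ to $\ket 0$ on every $c_g$.

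I would argue by induction along the bottom-to-top schedule of the $R_g$. The claim is that just before $R_g$ is applied, $c_g$ still holds its certificate value $[g\in\tau]$, and that after all $R_{g'}$ with $g' \succeq h$ have been applied, every child wire $c_a$ of an internal gate $a$ whose parents are all $\succeq h$ has been reset to $\ket 0$. The key point is that $c_a$ for $a\in\tau$, $a\neq r$, is flipped exactly once, namely by $R_g$ for the unique parent $g$ of $a$ in $\tau$. Uniqueness comes from \autoref{fact:tree-certs}: $\tau$ is a tree, and since fan-out is at most 2, the other parent $g'$ either lies outside $\tau$ or does not select $a$. The wire $c_a$ for $a\notin\tau$ must not be flipped at all. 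Because $R_g$ is scheduled after the $R_a$ for the children of $a$, and $c_g$ is only reset by the $R$ of its parents, which come later, $c_g$ still carries $[g\in\tau]$ whenever it is used as a control.

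For the case analysis, if $g\notin\tau$ then $c_g=\ket 0$ and $R_g$ does nothing. If $g\in\tau$ is an AND-gate or a unary OR-gate, all its internal children are in $\tau$ and get flipped, as required. The main obstacle is the case $g=a_0\lor a_1$ with $g\in\tau$. Here $R_g$ flips $c_{a_i}$ exactly when $n_{a_i}=1$, i.e.\ when $a_i(x)\neq 0$. I need that exactly the child selected by $\tau$ is nonzero under $x$. The selected child $a_i$ is nonzero because $\tau$ restricted below $a_i$ is a certificate for it. For the other child, $g(x)\neq 0$ because $g\in\tau$, and decomposability gives that every gate of a certificate is nonzero. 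Determinism then says that at most one $w_j a_j(x)$ is nonzero, and since the weights are nonzero the unselected child has $a_{1-i}(x)=0$.

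The same reasoning shows that an internal gate $a\notin\tau$ is never flipped. Every parent $g$ of $a$ either lies outside $\tau$, or is an OR-gate in $\tau$ that does not select $a$; in the second case the argument above gives $n_a=0$. A parent in $\tau$ cannot be an AND-gate, since an AND-gate in $\tau$ would force $a\in\tau$. Finally $c_r=\ket 1$, because $r\in\tau$ and $r$ has no parent, and the closing $X_{c_r}$ resets it.
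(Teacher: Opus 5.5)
Your proof is correct and follows the paper's argument. You analyze each basis term separately, the $n_g$ are restored by the inverse of layer (2), the tree property gives each $a\in\tau\setminus\{r\}$ a unique parent in $\tau$, and determinism identifies the selected OR-child as the one with $n_{a_i}=\ket 1$. You also make explicit the scheduling invariant that $c_g$ still holds $[g\in\tau]$ when $R_g$ fires, which the paper leaves implicit.

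One small slip: gates of a certificate are nonzero under $x$ because of determinism (no cancellation at OR-gates), not decomposability. You also do not need $g(x)\neq 0$ there, since $w_i a_i(x)\neq 0$ for the selected child already forces $w_{1-i}a_{1-i}(x)=0$ by determinism.
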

\begin{proof}
	The $n_g$ are set to $\ket 0$
	because they only act as control
	and are then reset by the inverse of the second layer at the end of the third.
	The $v_x$ only act as control in this layer,
	so they are left untouched.

	It remains to show that the $c_g$ are reset.
	Since $c_r$ is reset by the final $X$-gate,
	we need only consider internal $g \neq r$.
	For this we look at each summand vector
	$\pi(x) = \ket x \otimes \psi(T_D(x))_{c_g : g \in \I} \otimes \nu(x)$ individually.
	All $R_g$ with $g \notin \tau$ do not act on it,
	since they are controlled on $c_g$.
	We have to show that the $R_g$ with $g \in \tau$ reset all $c_h, h \in \tau \setminus \{ r \}$
	(which together with $c_r$ are precisely the $c_a$ in $\pi(\tau)$ set to $\ket 1$).

	The $R_g$ act only on the $c_a$ where $a$ is a child of $g$.
	Because certificates are trees
	(\autoref{fact:tree-certs}),
	every $a \in \tau \setminus \{ r \}$ has exactly one parent $p_a$ in $\tau$.
	Hence it remains to show that each $R_g, g \in \tau$
	negates $c_a$ for its children $a \in \tau \cap \I$
	and leaves all other qubits untouched.
	For AND-gates $g$,
	we note that if $g$ is in $\tau$,
	all its children $a_i$ are too,
	and that $R_g$ on $\pi(\tau)$
	negates all $c_{a_i}, a_i \in \I$ and leaves other qubits untouched.
	For OR-gates $g \in \tau$,
	exactly one child is in $\tau$.
	Because $D$ is deterministic,
	this is the one child $b$ of $g$ that is nonzero
	in every computation of $D(x)$ where $\tau = T(x)$,
	which is the one child with $\pi(\tau)_{n_b} = \ket 1$.
	It follows that $R_g$ on $\pi(\tau)$
	negates $c_b$ and leaves all other qubits untouched.
\end{proof}

Combining \autoref{cor:p-layer}
and Lemmas~\ref{lem:n-layer},~\ref{lem:r-layer},
we conclude:

\begin{corollary}[Correctness]\label{cor:correctness}
	If $D$ is a normalized weighted 2-d-DNNF and $S_2(D)$ is a quantum state,
	then $Q(D)$ computes $S_2(D)$.
\end{corollary}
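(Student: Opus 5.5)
The plan is to combine \autoref{cor:p-layer} with Lemmas~\ref{lem:n-layer} and~\ref{lem:r-layer}, tracking the full state of all registers through the three layers and then reading off the $v_x$ register.

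\emph{Initial state.} Every $c_g$ with $g \neq r$ and every $n_g$ starts in $\ket 0$, $c_r = \ket 1$, and the $v_x$ are in $\ket +$. This is exactly $\psi(\{r\})$ with the $n$-register zero.

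\emph{First layer.} By \autoref{cor:p-layer} the state becomes $\sum_{\tau \in \cert(D)} \val(\tau)\psi(\tau)$, tensored with $\ket 0$ on the $n_g$. Restricted to the $v_x$ qubits, $\psi(\tau)$ equals $\chi(\tau)$. By Fact~\ref{fact:tree-certs}, distinct certificates belong to disjoint sets of assignments. Expanding each $\ket +$ therefore regroups the sum over $x$ with coefficient $D(x)/\sqrt{m_D(x)}$. Here $D(x) = \val(T(x))$ by determinism, and the factor $2^{-(|\var(D)|-|\var(\tau)|)/2}$ comes from the $\ket +$ factors. This gives the form used as the starting point of Lemma~\ref{lem:n-layer}.

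\emph{Second and third layers.} Lemma~\ref{lem:n-layer} appends $\nu(x)$ in each summand. Lemma~\ref{lem:r-layer} then acts on each summand $\pi(x)$: it returns every $c_g$ and $n_g$ to $\ket 0$ and leaves the $v_x$ untouched. The whole circuit is linear, so applying this summand by summand gives the final state
\[
\Big(\sum_x \tfrac{D(x)}{\sqrt{m_D(x)}}\ket x\Big)\otimes\ket 0^{\otimes(2|\I|-1)} = S_2(D)\otimes\ket{0\cdots0}.
\]
So the ancillas end up disentangled and $Q(D)$ computes $S_2(D)$.

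\emph{Main obstacle.} The delicate step is applying Lemma~\ref{lem:r-layer} summand by summand. One must check that each summand is indexed by an assignment $x$ with a well-defined certificate $T(x)$, and that the resets are consistent across all $x$ sharing the same $\tau$. The worry is that different $x$ could carry different $\nu(x)$, so the OR-gate resets might act differently on them. This is resolved because the reset at an OR-gate $g \in \tau$ is controlled only on $n_b$ for the children $b$ of $g$. Determinism together with decomposability makes the unique child $b \in \tau$ nonzero for every $x$ certified by $\tau$, and every other child zero. The normalization and state hypotheses are needed only so that the amplitudes $D(x)/\sqrt{m_D(x)}$ form a unit vector. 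Unitarity then makes the final expression an actual state, matching $S_2(D)$ exactly with no stray global factor.
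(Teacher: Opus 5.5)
Your proposal is correct and follows the paper's own argument, which simply chains Corollary~\ref{cor:p-layer} with Lemmas~\ref{lem:n-layer} and~\ref{lem:r-layer}. You also make explicit two steps the paper leaves implicit: the regrouping of the certificate sum into the assignment sum, and the summand-wise use of linearity. One small correction to your closing remark: normalization is needed first so that the gates $U(w_1,w_0)$ in the $P_g$ are unitary, i.e.\ so that $Q(D)$ is a quantum circuit at all. Moreover, by the norm recurrence in the proof of Lemma~\ref{lem:normalization}, normalization by itself already forces $|S_2(D)| = 1$.
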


\section{Complexity}\label{sec:complexity}
We now look at the dimensions of $Q(D)$
and the complexity of generating it.

\begin{lemma}\label{lem:dimensions-with-CCH}
	$Q(D)$ consists of $O(|D|)$ gates,
	has width $|\var(D)| + 2|\I| - 1 < 2|D|$,
	and depth $O(\depth(D))$.
\end{lemma}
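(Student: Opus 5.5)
The plan is to read off gate count and width directly from the construction in \autoref{sec:impl}, and to bound the depth by a scheduling argument along the levels of $D$.

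\textbf{Gate count.} Each subcircuit $P_g$, $N_g$ and $R_g$ uses $O(1)$ gates from $\gateset$, because every gate of a 2-d-DNNF has fan-in at most $2$. Concretely, $P_g$ has at most four gates (one $CH$, one $CU$, one $CCF$, which is $CCX$ or $CCH$ composed with $CCX$, and one $CX$). $N_g$ has at most two gates and $R_g$ at most two. Layer (3) adds one $X$-gate and a copy of layer (2). The total is therefore $O(|\I|) = O(|D|)$ gates.

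\textbf{Width.} We count qubits. There are $|\var(D)|$ qubits $v_x$, $|\I|$ qubits $c_g$, and $|\I| - 1$ qubits $n_g$, for a total of $|\var(D)| + 2|\I| - 1$. Under the convention of at most one positive and one negative literal per variable, each variable contributes at least one leaf gate, so $|\var(D)| + |\I| \le |G| \le |D|$. Also $|\I| \le |D|$, and the $-1$ makes the inequality strict: $|\var(D)| + 2|\I| - 1 < 2|D|$.

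\textbf{Depth.} This is the main point and needs care. Schedule $P_g$ for all $g$ at level $i$ of $D$ in round $i$ of layer (1), processing levels top-down. Since $P_g$ must precede $P_{g'}$ for each child $g'$, this ordering is consistent. The concern is that two gates $g, g'$ at the same level may act on a common qubit, namely a shared child $c_a$ or a shared variable qubit $v_x$. Shared children $c_a$ are harmless: every gate has fan-out at most $2$, so $c_a$ is touched by at most two $P$-subcircuits.

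Variable qubits are the real obstacle. A single $v_x$ may be an input to many literal gates. Under the at-most-one-literal-per-polarity convention, however, there is one leaf $x$ and one leaf $\n x$, each with fan-out at most $2$, so $v_x$ is touched by at most four $P_g$'s (and similarly four $N_g$'s and $R_g$'s).

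With these bounds, each qubit is touched by $O(1)$ subcircuits per level. I would make this precise by building a conflict graph per level whose vertices are the subcircuits at that level, with an edge whenever two of them share a qubit. This graph has bounded degree, so it is $O(1)$-colourable, and each level can be executed in $O(1)$ depth. Layers (2) and (3) are handled the same way bottom-up; there, each $N_g$ writes only to its own $n_g$ and reads children, and each $R_g$ writes only to $c_a$ for children $a$.

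Summing over $\depth(D)+1$ levels in each of the three layers, plus its inversion and the one $X$-gate, gives depth $O(\depth(D))$. The subtle point to verify is that conflicting gates at the same level need no particular order, because they commute on the relevant support. But we only need \emph{some} valid serialization, and the order within a level is irrelevant for the proofs of \autoref{lem:p-layer} and \autoref{lem:r-layer}, since those proofs only use the topological order and the fact that gates on disjoint qubits commute.
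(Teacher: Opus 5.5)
Your proposal is correct and follows essentially the same route as the paper. Width comes from counting the $v_x$, $c_g$ and $n_g$ qubits, and depth is bounded by processing $D$ level by level and $O(1)$-colouring a bounded-degree conflict graph at each level; the within-level order is justified, as in the paper's parenthetical remark, by the correctness lemmas holding for every topological order. Your explicit per-subcircuit gate count and your check that $|\var(D)| + 2|\int(D)| - 1 < 2|D|$ just spell out details the paper leaves implicit.
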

\begin{proof}
	The number of ancillas used is precisely
	$2|\int(D)| - 1$:
	one $c_g$ for every $g \in \int(D)$ and one $n_g$ for every $g \in \int(D) \setminus \{ r \}$.
	This gives the width bound.

	To prove that $\depth(Q(D)) = O(\depth(D))$,
	we note that gates $P_g$ for all $g$ in the same level commute,
	in the sense that their application order does not affect 
	the outcome when applied to
	the state obtained by the circuit for all gates $P_{g'}$ of earlier levels.
	The same observation can be made for $R_g, N_g$.
	This follows from determinism, decomposability,
	and in particular that all certificates are trees.
	(It can also be seen as a corollary
	of the fact that the correctness proofs in \autoref{sec:correctness}
	apply to any topological order.)
	It thus suffices to show that for each level,
	the $P_g, R_g, N_g$ can be scheduled in constant depth.

	Write $P^i$ for the set of $P_g$ with $g$ of level $i$.
	Say that $P_g,P_{g'} \in P^i$ \emph{conflict}
	if they share a qubit and $g,g'$ are distinct.
	Consider the conflict graph $G$ on $P^i$,
	where there is an edge between $P_g,P_{g'}$ if and only if they conflict.
	Because literals are unique and fan-out and fan-in are bounded,
	the degree of $G$ is bounded
	($\deg G \leq 6$)
	and thus admits a finite graph coloring (using at most 7 colors).
	$P_g$ of the same color can be scheduled in parallel,
	hence $P^i$ can be scheduled in constant depth.
	A similar argument can be made for the $N_g$ and $R_g$.
	It follows that $\depth(Q(D)) = O(\depth(D))$ on gate set $\gateset$.
	(More careful counting shows that in terms of $P_g,R_g,N_g,N^{-1}_g$,
	the depth of $Q(D)$ is bounded by $11\depth(D) + O(1)$.)
\end{proof}

\begin{lemma}\label{lem:generating-complexity}
	Up to complex arithmetic,
	$Q(D)$ can be described in $O(|D|)$ time sequentially
	or in parallel constant time using $\poly(|D|)$ processors.
\end{lemma}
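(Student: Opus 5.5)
The plan is to read $Q(D)$ off $D$ gate by gate. Each internal gate $g$ contributes the three pieces $P_g$, $N_g$ and $R_g$. Their shape depends only on local information about $g$: its type (AND, unary OR, binary OR), the kinds of its at most two children (internal gate, positive literal, negative literal), and the at most two weights on its incoming wires.

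\textbf{Sequential bound.} First I fix an indexing of the qubits in one linear pass over $D$. This allocates $c_g$ for every $g \in \I$, $n_g$ for every $g \in \I \setminus \{ r \}$, and $v_x$ for every variable $x$, with a table mapping each literal to its $v_x$. This is $O(|D|)$ work.

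Next I process the gates. For each $g \in \I$ I emit $P_g$, $N_g$ and $R_g$ as lists of gates.
\begin{itemize}
\item This needs constant-time local tasks: sorting the two children (internal before literal, positive before negative), detecting the special case $x \lor \n x$, and substituting $v_x$ or $\n{v_x}$ for $n_a$ at literal inputs.
\item Each emitted $U(w_1,w_0)$ is read directly from the weights, with $w_1 = 0$ for unary OR-gates.
\item By normalization $|w_0|^2 + |w_1|^2 = 1$, so $U(w_1,w_0)$ is a valid $SU(2)$ element and needs no arithmetic at all. Only the later decomposition into the basic gate set, or the reweighting, would need complex arithmetic.
\end{itemize}
Since fan-in is bounded, this is $O(1)$ per gate, so $O(|D|)$ total.

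To order the pieces I compute one topological order of $D$ in linear time (DFS). The $P_g$ are listed in this order, and the $N_g$ and $R_g$ in reverse. After them come the final $X_{c_r}$ and the inverse of layer (2), which is the reversed list of $N_g$ with each gate self-inverse. By the remark in the proof of \autoref{lem:dimensions-with-CCH}, correctness holds for any topological order, so this sequential list is a valid description of $Q(D)$.

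\textbf{Parallel bound.} This is the main obstacle. Emitting the gate pieces is trivially parallel, one processor per gate. The difficulty is the ordering, since a topological sort is not obviously constant-time parallel.

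I would not output an order at all. Instead I would describe $Q(D)$ as the local pieces together with the partial order inherited from the DAG $D$ itself: $P_g$ precedes $P_{g'}$ whenever $g'$ is a child of $g$, and dually for $N$ and $R$. Each processor writes the edges incident to its gate, which takes $O(1)$ time with $\poly(|D|)$ processors. The justification is \autoref{lem:dimensions-with-CCH} together with the order-independence of \autoref{sec:correctness}: any linear extension of this partial order yields the same state.

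If an explicit layered schedule is demanded, the plan changes. Levels are depths, and computing them needs longest paths, which is a transitive-closure-type computation and not obviously constant time. The fallback is to compute a conflict-free coloring of each level via the bounded-degree conflict graph, whose degree is at most $6$. One option is a constant-round Cole–Vishkin-style reduction, which is only $O(\log^* |D|)$. Another is to note that the statement only requires a description, not an optimal schedule, so the precedence-edge description suffices. I would commit to the latter reading.
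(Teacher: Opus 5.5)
Your proposal is correct and takes essentially the same approach as the paper. Each $P_g$, $N_g$, $R_g$ is a constant-size piece determined by $g$, its (literal) children and its weights, so emitting the pieces gate by gate gives $O(|D|)$ sequential time, and one processor per gate gives constant parallel time. Your extra care about ordering makes explicit what the paper leaves implicit: a DFS topological order sequentially, and precedence edges inherited from $D$ in parallel, justified by the order-independence of the correctness proofs. This matches the paper's choice not to count the $O(\depth(D))$ schedule as part of the description; the paper only remarks that this schedule can be found sequentially in linear time by greedy coloring.
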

\begin{proof}
	The descriptions of subcircuits $P_g,R_g,N_g$
	are of constant size
	and depend only on $g$ and (the parity of) its literal input gates,
	and their inputs and outputs depend only on the inputs and outputs of $g$
	(they are amongst
	qubits $c_g,n_g$,
	$c_{a_i},n_{a_i}$ for inputs and outputs $a_i$ of $g$,
	and $v_x$ for variables $x$ whose literal gates are inputs of $g$).
	Thus, assuming a reasonable encoding of $D$
	where the type of a gate
	(in the case of literals, the parity and variable)
	can be obtained in constant time,
	a description of $Q(D)$ can be obtained in linear time from a description of $D$,
	or in constant parallel time by describing each of the subcircuits in parallel.

	Though we do not consider it part of the circuit description,
	the $O(\depth(D))$ scheduling of \autoref{lem:dimensions-with-CCH}
	can be computed in linear time,
	since the colorings in the proof can be found with the standard linear-time greedy algorithm.
\end{proof}

As a final optimization,
we note that qubit $c_r$ is $\ket 1$ throughout the computation of $Q(D)$ (until it is set to $\ket 0$ by an uncontrolled $X$ in the reset layer),
and can thus easily be eliminated.
Combining our results thus far
together with the gate set reduction in \autoref{app:gate-set}
implies \autoref{thm:s2}.

\section{Preparing \texorpdfstring{$S_1(D)$}{S1(D)}}\label{sec:reweighting}
We have shown how to prepare states $S_2(D)$ for normalized weighted 2-d-DNNF $D$.
But what if we want to prepare $S_1(D)$,
or $S_2(D)$ for some non-normalized weighted 2-d-DNNF?
In this section,
we show that weighted 2-d-DNNF $D$ where $S_1(D)$ is a quantum state
can be efficiently reweighted to obtain a weighted 2-d-DNNF $D'$ with $S_2(D') = c S_1(D)$ for some $c \in \R_{>0}$,
and also that weighted 2-d-DNNF can be efficiently normalized while preserving
the $S_2$-state it describes.
It follows that states $S_1(D),S_2(D)$ can be prepared for any weighted 2-d-DNNF.

A \emph{reweighting} $D'$ of a weighted d-DNNF $D$
is a weighted d-DNNF obtained from $D$ by multiplying its weights by nonzero factors.
Reweightings have the same certificates,
but possibly with different nonzero values.
In particular, $D(x) \neq 0$ if and only if $D'(x) \neq 0$.
Since reweightings have the same gate structure,
below we will often write $g'$ for the gate of $D'$ corresponding with $g$,
and if $w_i$ is a weight of an OR-gate child $c_i$,
we will write $w_i'$ for the weight of $c_i'$ in $D'$.

We first show how to obtain $S_2(D')$ up to a constant factor.
In this work, $a \propto b$ means that there is some real number $c > 0$ with $a = cb$.
\begin{lemma}\label{lem:s1-to-s2}
	For every weighted 2-d-DNNF $D$ with $S_1(D) \neq 0$
	there is a reweighting $D'$
	with $S_2(D') \propto S_1(D)$
	that is computable in linear time up to complex arithmetic.
\end{lemma}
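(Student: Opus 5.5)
The plan is to choose OR-weight factors that cancel the $1/\sqrt{m_D(x)}$ in the definition of $S_2$. A reweighting has the same certificates as $D$, so $m_{D'}(x) = m_D(x)$ and $T_{D'}(x) = T_D(x)$. It therefore suffices to find a reweighting $D'$ and a constant $c>0$ such that for every certificate $\tau$ of $D$,
\[
\val'(\tau) = c\,\val(\tau)\,2^{(|\var(D)| - |\var(\tau)|)/2}.
\]
Then $D'(x)/\sqrt{m_{D'}(x)} = c\,D(x)$ for all $x$ with $D(x)\neq 0$, and both sides vanish otherwise. This gives $S_2(D') = c\,S_1(D)$.

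To get such a $D'$, I would assign to every gate $g$ a natural number $k(g)$:
\begin{itemize}
\item $k(g)=1$ for literal gates;
\item $k(g)=\sum_i k(a_i)$ for an AND-gate with children $a_i$;
\item $k(g)=k(a_0)$ for an OR-gate. Any choice works here; taking one child's value keeps it computable in constant time per gate.
\end{itemize}
Then multiply the weight $w_i$ of child $a_i$ of each OR-gate $g$ by the positive real factor $2^{(k(g)-k(a_i))/2}$. This leaves $D'$ a weighted 2-d-DNNF with nonzero weights. All of this is a single bottom-up pass followed by one pass over the edges, so it takes linear time up to complex arithmetic (a power of $\sqrt 2$ per edge).

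The key claim, proved by induction on gates bottom-up, is that for every gate $g$ and every certificate $\tau$ of $g$,
\[
\val'(\tau) = \val(\tau)\, 2^{(k(g) - |\var(\tau)|)/2}.
\]
\begin{itemize}
\item Leaves: $\val=\val'=1$ and $k(g)=1=|\var(\tau)|$.
\item AND-gates: $\tau$ is the union of certificates $\tau_i$ of the children. By decomposability the sets $\var(\tau_i)\subseteq\var(a_i)$ are disjoint, so $|\var(\tau)|=\sum_i|\var(\tau_i)|$. Values multiply, and the exponents add up to $k(g)-|\var(\tau)|$.
\item OR-gates: $\tau$ consists of the edge to one child $a_i$ plus a certificate $\tau_i$ of $a_i$. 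The new edge factor $2^{(k(g)-k(a_i))/2}$ combines with the inductive factor $2^{(k(a_i)-|\var(\tau_i)|)/2}$ to give $2^{(k(g)-|\var(\tau)|)/2}$, since $\var(\tau)=\var(\tau_i)$.
\end{itemize}

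Applied at the root $r$, the claim gives the displayed requirement with $c = 2^{(k(r)-|\var(D)|)/2} > 0$.

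The step I expected to be the main obstacle is the most obvious choice, $k(g)=|\var(g)|$, which mimics smoothing. For OR-gates that needs the size of a union of variable sets, and computing that does not obviously fit in linear time. The point of the construction is that the induction never uses the value of $k$ at OR-gates. So any locally computable choice works, and the discrepancy between the two choices is absorbed into the global constant $c$, which is allowed by $\propto$. The hypothesis $S_1(D)\neq 0$ only serves to make the target proportionality nontrivial.
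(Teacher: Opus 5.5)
Your proof is correct and is essentially the paper's argument. The paper sets $K_g = 2^{-k(g)/2}$ with the choice $k = 0$ at OR-gates, i.e.\ $K_x = K_{\overline{x}} = 2^{-1/2}$, $K_{a\times b} = K_aK_b$ and $K_{\sum_i w_i c_i} = 1$; your factor $2^{(k(g)-k(a_i))/2}$ then becomes $w_i' = w_i K_{a_i}$, and the paper proves your invariant in the form $\val_{D'}(\tau) = 2^{-|\var(\tau)|/2}\val_D(\tau)/K_g$ by the same three-case induction. The one thing to tighten is the cost claim: a power of $\sqrt 2$ per edge is not a unit-cost operation in the stated model, so instead carry the number $2^{-k(g)/2}$ itself bottom-up by products, as the paper does with $K_g$, which keeps each edge at $O(1)$ arithmetic operations.
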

\begin{proof}
	It suffices to find a reweighting $D'$
	so that $D'(x) \propto \sqrt{m_D(x)}D(x)$.
	Equivalently,
	it suffices that $\val_{D'}(\tau) \propto 2^{- |\var(\tau)|/2}\val_D(\tau)$
	for all certificates $\tau$,
	where with $\val_{D'}(\tau)$ we refer to the value of certificate $\tau$ taking edge weights from $D'$.
	Define $K_g$ recursively by
	$K_x = K_{\n x} = 2^{-1/2}$;
	$K_{a \times b} = K_a K_b$;
	and $K_{\sum_i w_i c_i} = 1$.
	Define the reweighting $D'$ by $w_i' = w_i K_{c_i}$.

	Intuitively, this reweighting makes every $w_i'$ compensate
	for newly introduced variables not yet compensated for by lower $+$-gates.
	Formally,
	we prove by induction on $g$ that
	$\val_{D'}(\tau) = 2^{- |\var(\tau)|/2} \val_D(\tau) / K_g$
	for any certificate $\tau \in \cert(g)$:
	\begin{itemize}
		\item Literals $g$ have one certificate $\tau = \{ g \}$
			and $\val_{D'}(\tau) = 1 = 2^{-1/2} \cdot 1 / 2^{-1/2}$.
		\item For certificates $\tau$ of gates $g = a \times b$
			there are unique
			$\sigma \in \cert(a), \rho\in \cert(b)$
			with $\tau = \{g \} \cup (\sigma \sqcup \rho)$
			and we find
			\begin{align*}
				\val_{D'}(\tau)
				&= \val_{D'}(\sigma) \cdot \val_{D'}(\rho)
				\\&= 2^{- |\var(\sigma)|/2} \val_D(\sigma) / K_a
				\cdot 2^{- |\var(\rho)|/2} \val_D(\rho) / K_b
				\\&= 2^{- (|\var(\sigma)| + |\var(\rho)|) / 2}
				\val_D(\sigma)\val_D(\rho) / (K_a K_b)
				\\&= 2^{- |\var(\tau)|/2} \val_D(\tau) / K_g.
			\end{align*}
		\item For $g = \sum_i w_i c_i$ and certificate $\tau \in \cert(g)$
			containing $c_i$
			and writing $\tau'$ for $\tau$ without $g$,
			we have
			\begin{align*}
				\val_{D'}(\tau)
				&= w_i' \val_{D'}(\tau')
				= w_i K_{c_i} \cdot 2^{- |\var(\tau')|/2}\val_{D}(\tau') / K_{c_i}
			      \\&= 2^{- |\var(\tau')|/2} \val_D(\tau)
				= 2^{-|\var(\tau)|/2} \val_D(\tau) / K_g.
			\end{align*}
	\end{itemize}
	Hence,
	writing $r$ for the output gate,
	we have
	$\val_{D'}(\tau) = 2^{-|\var(\tau)|/2} \val_D(\tau) / K_r$
	for all certificates $\tau$,
	and thus $D'(x) \propto \sqrt{m_D(x)} D(x)$
	as desired.
	By the recursive definition of $K_g$,
	$D'$ can be computed in linear time up to complex arithmetic.
\end{proof}
It remains to normalize.

\begin{lemma}\label{lem:normalization}
	For every weighted 2-d-DNNF $D$ with $S_2(D) \neq 0$
	there is a normalized reweighting $D'$ with $S_2(D') = S_2(D) / |S_2(D)|$.
	It can be computed in linear time up to complex arithmetic.
\end{lemma}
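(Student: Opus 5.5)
The plan is to compute, bottom-up, a norm $N_g>0$ for every gate $g$, namely the $\ell_2$-norm of the $S_2$-type vector of the subcircuit rooted at $g$. The reweighting will then divide each OR-gate weight by the parent's norm and multiply by the child's norm.

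\textbf{Step 1: define the norms and their recursion.} For a gate $g$, set $N_g^2 = \sum_{\tau\in\cert(g)} |\val(\tau)|^2 \, \|\chi(\tau)\|^2 = \sum_{\tau \in \cert(g)}|\val(\tau)|^2$. The second equality holds because each $\chi(\tau)$ is a product of unit vectors, hence a unit vector. The $\chi(\tau)$ for distinct $\tau\in\cert(g)$ are pairwise orthogonal: two distinct certificates disagree at some OR-gate, and by determinism some variable appears with opposite literals in them, so the corresponding factors are $\ket 0$ versus $\ket 1$. Therefore $N_g = |S_2(g)|$. The recursion is:
\begin{itemize}
\item $N_\ell = 1$ for literals $\ell$;
\item $N_{a\times b} = N_a N_b$, since certificates of $a \times b$ are disjoint unions of certificates of $a$ and of $b$ (decomposability, \autoref{fact:tree-certs});
\item $N_{\sum_i w_i c_i}^2 = \sum_i |w_i|^2 N_{c_i}^2$, since $\cert(g)$ is the disjoint union over $i$ of the sets $\cert(c_i)$ with an edge of weight $w_i$ prepended.
\end{itemize}
All $N_g$ are nonzero: every weight is nonzero, every gate has at least one certificate (constant-free), and values are products of nonzero weights. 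The recursion gives all $N_g$ in linear time up to complex arithmetic, with one square root per OR-gate.

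\textbf{Step 2: define the reweighting.} Set $w_i' = w_i N_{c_i}/N_g$ for each OR-gate $g$ with children $c_i$. Normalization is then immediate from the OR-recursion: $\sum_i |w_i'|^2 = \sum_i |w_i|^2N_{c_i}^2/N_g^2 = 1$.

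\textbf{Step 3: compare values of certificates.} I will show by induction on $g$ that $\val_{D'}(\tau) = \val_D(\tau)/N_g$ for every $\tau \in \cert(g)$, in the style of \autoref{lem:s1-to-s2}.
\begin{itemize}
\item Literals: both sides equal $1$.
\item AND-gates: the values and the norms are both multiplicative over $\tau = \{g\}\cup(\sigma\sqcup\rho)$.
\item OR-gates: $\val_{D'}(\tau) = w_i' \val_{D'}(\tau') = \frac{w_i N_{c_i}}{N_g}\cdot\frac{\val_D(\tau')}{N_{c_i}} = \val_D(\tau)/N_g$.
\end{itemize}
At the root this gives $S_2(D') = \sum_\tau \val_{D'}(\tau)\chi(\tau) = S_2(D)/N_r = S_2(D)/|S_2(D)|$, using Step 1. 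The certificate set is unchanged, since a reweighting by nonzero factors preserves the certificates.

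The main obstacle is Step 1, specifically the identity $N_r = |S_2(D)|$, which rests on the orthogonality of the $\chi(\tau)$. Without orthogonality, $N_r$ would merely be a formal sum. The argument I would give is this: two distinct certificates of the same gate first diverge at some OR-gate $h$, taking different children $c_0 \ne c_1$. By determinism, $c_0$ and $c_1$ are never simultaneously nonzero. Any assignment consistent with both subcertificates would make both nonzero, since certificates are nonzero on all assignments agreeing with their literals (constant-free). Hence the two subcertificates contain a complementary pair of literals $x$ and $\n x$, which makes the corresponding $\chi$ factors orthogonal.
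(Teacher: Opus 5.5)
Your proof is correct and follows essentially the same route as the paper: the same reweighting $w_i' = w_i |S_2(c_i)|/|S_2(g)|$ (your $N_g$ is exactly $|S_2(g)|$), the same bottom-up recurrence for the norms, and an induction showing that values scale by $1/|S_2(g)|$. The difference is only in presentation. You define the norm as a sum over certificates, prove the orthogonality of the $\chi(\tau)$ explicitly, and check that every $N_g$ is nonzero. The paper instead works with the decomposition $S_2(g) = \sum_i w_i \tilde S_2(c_i)$ and leaves implicit the orthogonality behind its Pythagorean step.
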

\begin{proof}
	For gates $g = \sum_i w_i c_i$
	we have $S_2(g) = \sum_i w_i \tilde S_2(c_i)$,
	where we write $\tilde S_2(c_i)$
	for the appropriate extension $S_2(c_i) \otimes \ket +^{\otimes |\var(g)\setminus\var(c_i)|}$
	of $S_2(c_i)$ to $\var(g)$.
	Note that $|\tilde S_2(c_i)| = |S_2(c_i)|$.
	Define $D'$ by $w_i' = w_i |\tilde S_2(c_i)| / |S_2(g)|$.
	It can be shown by induction
	that $g'(x) = g(x) / |S_2(g)|$ for all gates $g$,
	hence $D'(x) = D(x) / |S_2(D)|$
	and $S_2(D') = S_2(D) / |S_2(D)|$.
	Furthermore, $D'$ is normalized:
	\begin{align*}
		\sum_i |w_i'|^2
		&= \sum_i |w_i|^2 |S_2(c_i)|^2 / |S_2(g)|^2
		= \sum_i |w_i|^2 |\tilde S_2(c_i)|^2 / |S_2(g)|^2
	      \\&= \left(\sum_i |w_i \tilde S_2(c_i)|^2\right) / |S_2(g)|^2
		= |S_2(g)|^2 / |S_2(g)|^2
		= 1.
	\end{align*}
	It remains to show that
	all $|S_2(g)|$ together can be computed in linear time up to complex arithmetic.
	This can be done using the recurrence
	\begin{align*}
		|S_2(x)| &= |S_2(\n x)| = 1;\\
		|S_2(a \times b)| &= |S_2(a)| |S_2(b)|;\\
		|S_2(\textstyle\sum_i w_i c_i)| &= \sqrt{\textstyle\sum_i |w_i|^2 |S_2(c_i)|^2}.
		\qedhere
	\end{align*}
\end{proof}

Combining lemmas~\ref{lem:s1-to-s2}~and~\ref{lem:normalization}, we in particular obtain:

\begin{corollary}\label{lem:reweighting}
	There is an algorithm reweighting weighted 2-d-DNNF $D$
	with $S_1(D)$ a state
	into normalized weighted 2-d-DNNF $D'$
	with $S_1(D) = S_2(D')$.
	It takes linear time up to complex arithmetic.
\end{corollary}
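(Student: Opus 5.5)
The plan is to compose the two preceding lemmas and then check that the normalization constant comes out right. Let $D$ be a weighted 2-d-DNNF with $S_1(D)$ a state; in particular $S_1(D) \neq 0$.

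First I apply \autoref{lem:s1-to-s2} to obtain a reweighting $D''$ of $D$ with $S_2(D'') = c\, S_1(D)$ for some real $c > 0$. This is computable in linear time up to complex arithmetic. Since $c>0$ and $S_1(D)\neq 0$, we have $S_2(D'') \neq 0$. The hypothesis of \autoref{lem:normalization} therefore holds for $D''$. Note that $D''$ is still a 2-d-DNNF, because a reweighting only changes edge weights. This also preserves determinism: multiplying weights by nonzero factors leaves the zero pattern of each product $w_i g_i(x)\cdot w_j g_j(x)$ unchanged, by an induction from the bottom up.

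Next I apply \autoref{lem:normalization} to $D''$. This yields a normalized reweighting $D'$ with
\[
S_2(D') = \frac{S_2(D'')}{|S_2(D'')|} = \frac{c\,S_1(D)}{c\,|S_1(D)|} = S_1(D).
\]
The last equality uses $c>0$ real, so the factor cancels with no phase, together with $|S_1(D)| = 1$ because $S_1(D)$ is a state.

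Finally, a reweighting of a reweighting of $D$ is again a reweighting of $D$, since the factors compose multiplicatively and stay nonzero. The total running time is the sum of two linear-time passes.

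The only point needing care is the claim that $c$ is a positive real rather than an arbitrary nonzero complex number. Otherwise a global phase would survive and we would only get $S_2(D') = e^{i\theta} S_1(D)$. I would check this directly in the proof of \autoref{lem:s1-to-s2}: there the proportionality constant is $1/K_r$, where each $K_g$ is a product of factors $2^{-1/2}$ and $1$. Hence $K_r$ is a positive real and so is $c$.
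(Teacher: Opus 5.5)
Your proposal is correct and is exactly the paper's argument: compose \autoref{lem:s1-to-s2} with \autoref{lem:normalization}, and note that $|S_2(D'')| = c$ cancels the positive constant. The positivity of $c$ that you double-check is already built into the paper's definition of $\propto$. Strictly, $c = 2^{-|\var(D)|/2}/K_r$ rather than $1/K_r$, but it is a positive real either way.
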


In practice, it may be more efficient to combine the two procedures.
In the proof of \autoref{lem:s1-to-s2},
changing the definition of $K_g$ for $+$-gates to
$K_{\sum_i w_i c_i} = \sqrt{\sum_i|K_{c_i}w_i|^2}$
and $w_i'$ to $w_i K_{c_i} / K_g$
makes $D'$ a normalized reweighting with $S_2(D') = S_1(D)$.

\section{Related work}\label{sec:related-work}
\subsection{Weighted FBDD}
Recently, Tanaka, Yamasaki, and Murao \cite{tanaka2026qsp}
showed that a state represented by a weighted free binary decision diagram (weighted FBDD)
$F$ with $N$ nodes can be computed by a quantum circuit
of size and depth $O(|F|)$ using $N$ ancillas.
In \cite[Table 1]{tanaka2026qsp},
they list various other state preparation methods
and show that their approach compares favorably
if the to-be-prepared state can be described as a succinct (polynomial in the number of qubits) weighted FBDD.

Weighted FBDD is a sublanguage of weighted d-DNNF \cite{darwiche_knowledge_2002}:
there is a straightforward linear-time conversion from weighted FBDD $F$ to weighted d-DNNF.
\autoref{cor:main} then gives a quantum circuit of size and width $O(|F|)$
and depth $O(\min(|F|, \depth(F)\log |F|))$,
so our approach asymptotically subsumes that of \cite{tanaka2026qsp}.
It does so strictly,
since weighted d-DNNF are exponentially more succinct than weighted FBDD
(\cite{beame2013}, \autoref{app:succinctness}).
That is, there are families of quantum states
that can be represented by weighted d-DNNF of polynomial size (in the number of qubits)
but only by exponential-size weighted FBDD.\footnote{
	In particular,
	deterministic tree tensor networks can be transformed into weighted d-DNNF in linear time,
	while there are families of states with small deterministic tree tensor networks
	but no small weighted FBDD
	and vice versa
	(\cite{bollig_relative_2019,savicky_read-once_2000,capelli2016structural,quist2026}, \autoref{app:succinctness}).
Thus our approach is significantly more general.}

The linear-time conversion from weighted FBDD to weighted d-DNNF
does increase the size by a constant factor,
so in practice using \cite{tanaka2026qsp} directly may be more ancilla-efficient
if the state is represented by a small weighted FBDD and there is no much smaller weighted d-DNNF.

The method of Tanaka et al. works roughly by
creating a superposition over the partial paths in the weighted FBDD,
using an ancilla for every node to keep track of which nodes are visited in each path,
and increasing the path lengths until a superposition over all paths is obtained.
Each basis state has a unique path in the weighted FBDD,
and the amplitude of $\ket x$ in the represented state is the amplitude of the path,
so the preparation is completed by uncomputing the ancillas.
Our preparation algorithm,
on the other hand,
makes use of the fact that each basis state $\ket x$
with $D(x) \neq 0$ has a unique certificate.
We create a superposition over all partial certificates,
increasing depth until a superposition over all certificates is reached.
In contrast with weighted FBDD,
uncomputing the ancilla wires posed a significant new challenge,
which we solved
by computing (in superposition) for each basis state which gates of $D$
are nonzero into the additional $n_g$-wires
and then using this to uncompute the certificate ancillas
(cf. layers 2, 3 of \autoref{sec:impl}).

\subsection{Other related work}
Because, asymptotically speaking,
our work subsumes that of Tanaka et al.,
their extensive overview and comparison with other state preparation methods
\cite[Table 1]{tanaka2026qsp}
applies to our method as well.
Below,
we mainly list some new comparisons
as well as a stricter comparison with the Grover--Rudolph / Kaye--Mosca approximate QSP approach \cite{grover_creating_2002,kaye_quantum_2004},
where we note that their method can be implemented efficiently from a weighted d-DNNF.
In general, since QSP methods make different assumptions on the to-be-prepared quantum state
it is tricky to make good comparisons without analyzing the relative succinctness of the input languages.
Many methods also focus on approximate state preparation,
where instead of the actual to-be-prepared state only an approximation is prepared (within a given error bound),
while ours and that of Tanaka et al. are exact.

\subsubsection{Knowledge-compilation-based state preparation}
Apart from \cite{tanaka2026qsp},
knowledge-compilation-based state preparation work
for weighted OBDD \cite{mozafari_efficient_2022}
(a subclass of weighted FBDD)
and extensions for LIMDD \cite{hong_limtdd_2026}
do not guarantee low complexity in the size of the representation,
but in the number of paths,
which in the worst case is exponential in the size of the decision diagram.

Compared to other well-known knowledge representations,
d-DNNF is relatively succinct.
Other well-known succinct representations are non-deterministic,
notably SDNNF
(where there is no determinism but extra structural requirements on variable decomposition)
and non-deterministic FBDD
(where decision nodes can have multiple edges per decision).
d-DNNF, SDNNF, and non-deterministic FBDD are pairwise incomparable,
in the sense that for each two there are families of Boolean functions polynomially represented in the one and not in the other \cite{amarilli_connecting_2020}.

In \cite{quist2026} it is shown that weighted SDNNF are equivalent with tree tensor networks (TTN),
and weighted non-deterministic FBDD with matrix product states (MPS).
For both MPS and TTN,
to the best of our knowledge,
current tractable approaches are only approximate or assume low bond dimension \cite{malz_preparation_2024,gonzalez-conde_efficient_2024,rudolph_mps_2024,manabe_state_2025}.

\subsubsection{Grover--Rudolph preparation}
While we concern ourselves with exact preparation,
we note that the general approximate state preparation method of \cite{grover_creating_2002,kaye_quantum_2004}
can be efficiently implemented if the state is given as a weighted d-DNNF.

In this method,
the amplitudes of the to-be-prepared state are first computed
into quantum registers.
For a given weighted d-DNNF $D$,
it roughly suffices to construct two gadgets:
a subcircuit
that on input $\ket{b_0 \dots b_i}$ and a zeroed $m$-bit quantum register
computes into the register an $m$-bit approximation of
$\sum_{x : \land_{j \leq i} x_j = b_j} |D(x)|^2$,
and a subcircuit that given $\ket x$ and a zeroed $m$-bit quantum register
computes an $m$-bit approximation of the phase of $D(x)$ into the register.
Both can be implemented by evaluating modifications of the arithmetic circuit underlying $D$.
By \cite{miller_efficient_1986},
these evaluations can be done by circuits of
depth $\polylog(|D|)$ and size $\poly(|D|)$.
This leads to a quantum circuit of size $\poly(m, |D|)$
and depth $O(|\var(D)|\polylog(m,|D|))$
that
approximates the state represented by $D$
(for a good choice of $m$ based on $D$ and the desired precision).
In comparison,
our exact approach has sharper size and width (ancilla) bounds as well as a matching depth bound.
(This uses that weighted d-DNNF $D$ can be made depth $O(|\var(D)|)$
by merging consecutive OR-nodes.)

\bibliographystyle{plainurl}
\bibliography{references.bib}

\appendix

\section{Reducing the gate set}\label{app:gate-set}
We briefly show that our results still hold
when we restrict our gate set to $X,CX$ and the $U(p,q)$.
\begin{lemma}\label{lem:gate-set}
	Gate set $\{ X, CX \} \cup SU(2)$ suffices.
\end{lemma}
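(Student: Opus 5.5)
The plan is to go through the gates of $\gateset$ that are not already in $\{X, CX\} \cup SU(2)$, namely $CCX$, $H$, $CH$, $CCH$ and $CU(p,q)$. For each one we give an implementation of constant size and depth, possibly with a constant number of ancillas. Every gate of $Q(D)$ is then replaced by its implementation. Gate count and depth grow by at most a constant factor, so \autoref{lem:dimensions-with-CCH} carries over, and the circuit description can still be produced in linear time.

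The steps, in order:
\begin{itemize}
\item[(i)] $H$. Note $H \notin SU(2)$ since $\det H = -1$. However, $U(1/\sqrt2, 1/\sqrt2)$ agrees with $H$ up to a Pauli correction, $H = U(1/\sqrt2,1/\sqrt2)\, Z$ up to sign, and $Z$ is itself $U(i,0)$ up to a global phase $i$. Every \emph{uncontrolled} use of $H$ (for instance preparing the initial $\ket+$ on the $v_x$) tolerates a global phase, so it can simply be replaced by $U(1/\sqrt2,1/\sqrt2)$ up to a phase.
\item[(ii)] $CU(p,q)$. Use the standard ABC decomposition for $SU(2)$: write $U = A X B X C$ with $ABC = I$ and $A,B,C \in SU(2)$. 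Then $CU = A_t\, CX\, B_t\, CX\, C_t$. No phase correction is needed because $\det U = 1$.
\item[(iii)] $CCX$. Use the standard Toffoli decomposition into $CX$ and single-qubit $T$, $H$ gates. Replace each single-qubit gate by an $SU(2)$ gate equal to it up to a global phase; each such phase is a global phase of the whole Toffoli and is harmless. Alternatively, use the relative-phase-free construction in which the $V = \sqrt X$ factors are rewritten through (ii).
\item[(iv)] $CH$ and $CCH$. The same problem as in (iii) arises here: controlled versions turn global phases into relative phases. We therefore conjugate. Write $H = W X W^\dagger$ with $W = R_y(\pi/4)$, which holds because $H$ is a reflection whose eigenvalues are $\pm1$, like those of $X$. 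Hence $CH = W_t\, CX\, W_t^\dagger$ and $CCH = W_t\, CCX\, W_t^\dagger$ exactly, and $CCX$ is handled by (iii).
\end{itemize}
Negated controls are already expressed with $X$ conjugation.

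The main obstacle is phase bookkeeping. Controlled gates must be implemented \emph{exactly}, not up to phase, because the $CH$ and $CCH$ gates act inside a superposition over certificates. Relative phases there would corrupt the amplitudes $\val(\tau)$. The conjugation trick $H = WXW^\dagger$ together with an exact Toffoli is the route I would rely on. The one step to double-check is that the chosen Toffoli decomposition is exact once its phase-carrying single-qubit gates are replaced by $SU(2)$ elements. If that fails, fall back to applying (ii) to $C(\sqrt X)$-type factors, adjusting phases with a controlled $U(e^{i\theta},0)$, which realises the phase gate $\mathrm{diag}(1,e^{i\theta})$ up to a phase on the control.
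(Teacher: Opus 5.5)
Your proposal is correct in substance, but it handles $CCX$ and $CCH$ differently from the paper.

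\textbf{How the paper does it.} The paper never implements $CCX$ or $CCH$ exactly. It applies Barenco et al.'s Lemma 6.1 ($CCA$ from $CX$, $CV$, $CV^\dagger$ with $V^2=A$). It then chooses $V\in SU(2)$ so that $V^2$ agrees with $A$ only on the one target state on which the gate actually fires in $Q(D)$:
\begin{itemize}
\item $V=U(2^{-1/2},2^{-1/2})$, with $V^2\ket0=\ket1$, in layers 1--2;
\item $V=U(2^{-1/2},-2^{-1/2})$ in the reset layer;
\item $V=U(\cos(\pi/8),-\sin(\pi/8))$, with $V^2\ket+=\ket0$, for $CCH$.
\end{itemize}
Within that lemma this restriction is forced: $\det(V)^2=\det X=-1$, so no $V\in SU(2)$ satisfies $V^2=X$. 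The construction is cheap and yields $S_2(D)$ with no phase at all. However, it relies on the invariants behind \autoref{sec:correctness}.

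\textbf{How your route differs.} You obtain genuine unitaries, correct up to a scalar. The single-qubit gates of the standard Toffoli network act unconditionally, so replacing each by an $SU(2)$ representative only multiplies the whole $8\times 8$ unitary by a constant. That constant cannot be avoided on three qubits: as operators there, all gates of $\{X,CX\}\cup SU(2)$ have determinant $1$, while $\det CCX=-1$. Conjugation then gives $CCH$. Your route is state-independent, at the price of larger constants and a global phase on the output. That phase is harmless, and you can remove it with $U(e^{-i\Phi},0)$ on an ancilla in $\ket0$, where $e^{i\Phi}$ is the accumulated phase. For $H$ and $CH$ the paper uses the exact identities $H=X\circ U(2^{-1/2},2^{-1/2})$ and $CH=CX\circ CU(2^{-1/2},2^{-1/2})$. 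Your $CH=W_t\,CX\,W_t^\dagger$ is also exact and uses fewer $CX$ gates.

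\textbf{Slips to fix.}
\begin{itemize}
\item In (i), $U(2^{-1/2},2^{-1/2})$ alone is not $H$ up to a global phase. It sends $\ket+$ to $\ket1$, which breaks the uncontrolled $H$ left after eliminating $c_r$. The correct $SU(2)$ representative is $U(2^{-1/2},2^{-1/2})\,U(i,0)=iH$.
\item In (iv), with $R_y(\theta)=e^{-i\theta Y/2}$ you get $R_y(\pi/4)\,X\,R_y(-\pi/4)=XHX$, not $H$. Take $W=R_y(-\pi/4)=U(\cos(\pi/8),-\sin(\pi/8))$ instead.
\item In the alternative offered in (iii), $\sqrt X$ cannot be fed to (ii): by the same determinant argument, it is not in $SU(2)$. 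Your primary Toffoli route makes that alternative unnecessary.
\end{itemize}
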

\begin{proof}
	We have to show that the other gates in $\gateset$
	(defined in \autoref{subsec:qsp})
	used in $Q(D)$
	can be replaced appropriately.

	By \cite[Lemma 5.1]{barenco_elementary_1995},
	the $CU(p,q)$ can be implemented with two $CX$
	and three $SU(2)$-gates.
	$H$-gates can be implemented using
	$H = X \circ U(2^{-1/2},2^{-1/2})$,
	and likewise
	$CH = CX \circ CU(2^{-1/2},2^{-1/2})$.

	Instead of implementing $CCX$ and $CCH$ exactly,
	we replace their occurrences in $Q(D)$
	by subcircuits that act appropriately on the basis states that actually occur.
	We use \cite[Lemma 6.1]{barenco_elementary_1995}:
	for any single-qubit gate $A$,
	the doubly-controlled $CCA$ can be implemented
	with a constant number of $CX$ and $CV, CV^\dagger$,
	where $V^2 = A$.
	We will choose appropriate $CU(p,q)$ for $CV$
	to replace instances of $CCX$ and $CCH$ in $Q(D)$.
	(Note that $CU(p,q)^\dagger = CU(p^*, -q)$.)

	In the first two layers,
	the $CCX$ only fire on $\ket 0$,
	so it suffices to find a $CV$
	with $V^2\ket 0 = \ket 1$.
	This is satisfied by $CU(2^{-1/2},2^{-1/2})$.
	In the reset layer,
	the $CCX$ only fire on $\ket 1$
	and we can choose $CU(2^{-1/2},-2^{-1/2})$ for $CV$.
	The $CCH$ gates only fire on $\ket +$
	and we can take $CU(\cos(\pi/8), -\sin(\pi/8))$ for $CV$.
\end{proof}

\section{Succinctness of weighted knowledge compilation languages for quantum state representation}\label{app:succinctness}
While succinctness relations for unweighted knowledge representation languages are well-known
(see in particular \cite{darwiche_knowledge_2002} and \url{https://circuitzoo.net}),
the relative succinctness of their weighted variants
(for representing quantum states)
is less fleshed out,
and proofs do not necessarily follow from the unweighted results.

As noted in \cite{colnet_compilation_2021},
succinctness results for deterministic unweighted knowledge representation languages
often do carry over to their weighted variants.
In particular, weighted d-DNNF are exponentially more succinct than weighted FBDD.
Briefly:
let $F$ be a family of Boolean functions that can be represented
by polynomial-size d-DNNF but only by exponential-size FBDD \cite{beame2013}.
Consider the family of quantum states $\Psi = \{ \psi_f : f \in F \}$
with $\psi_f = \sum_x N^{-1/2} f(x) \ket x$,
where $N$ is the model count of $f$.
A small weighted d-DNNF representing $\psi_f$ is easily obtained from a small d-DNNF representing $f$.
Assume towards a contradiction that the states in $\Psi$ can be represented by small weighted FBDD $D_f$.
Modify $D_f$ so that all edges of weight $0$ point directly to the $0$-sink,
and then remove all edge weights to obtain a succinct FBDD representing $f$.
This contradicts the assumption that $F$ cannot be represented succinctly in FBDD.
It follows that weighted FBDD is exponentially less succinct than weighted d-DNNF
for representing quantum states.

Similar arguments can be made for all the separations of deterministic representations
cited in this work.
For non-deterministic representations,
the addition of weights can lead to cancellation
and the argument above generally does not apply.

\section{Weighted \texorpdfstring{$\textup d$-DNNF}{d-DNNF} with constants}\label{app:constant-free}
\begin{lemma}\label{lem:constant-free}
	A nonconstant weighted d-DNNF $D$ with constant gates
	can be made constant-free in linear time
	up to complex arithmetic.
	The result $D'$ satisfies
	$|D'| \leq 2|D|$
	and $\depth(D') \leq 2\depth(D)$.
\end{lemma}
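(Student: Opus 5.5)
The approach is constant propagation: a single bottom-up pass over $D$ in topological order removes the constant gates and compensates for them by rescaling edge weights. Constant gates are leaves with a fixed value $c \in \C$. The nonzero case must be treated separately from the zero case, because in the paper's definition $\times$-gates carry no weights. For every gate $g$ the pass computes a label: either $g$ is \emph{constant} with value $c_g \in \C$ (after simplification), or $g$ is \emph{nonconstant}. In the nonconstant case the pass also produces a constant-free gate $g'$ and a scalar $\lambda_g \in \C\setminus\{0\}$ with $g = \lambda_g g'$ as functions on assignments.

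The rules are as follows.
\begin{itemize}
\item A literal is nonconstant with $g' = g$ and $\lambda_g = 1$.
\item A constant leaf is constant with its value.
\item For a $\times$-gate $g = \prod_i g_i$: if some child is constant $0$, then $g$ is constant $0$. Otherwise let $C$ be the product of the constant children's values and $\lambda$ the product of the $\lambda_{g_i}$ over nonconstant children. If no child is nonconstant, $g$ is constant $C$. Otherwise $g' = \prod g_i'$ over the nonconstant children and $\lambda_g = C\lambda$. Decomposability is preserved, since we only drop children.
\item For a $+$-gate $g = \sum_i w_i g_i$: drop children that are constant $0$ and replace $w_i$ by $w_i\lambda_{g_i}$ for nonconstant children. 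Constant nonzero children $c_j$ remain as a residual constant term $\sum_j w_j c_j$.
\end{itemize}

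The main obstacle is a $+$-gate that has both nonconstant children and a nonzero constant child. Determinism then forces every other child $w_i g_i$ to be identically zero. Since we assume all weights are nonzero, every nonconstant child would have to be the zero function; but then it would have to be constant $0$, which is not yet known syntactically.

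I would handle this by first arguing, inductively, that a nonconstant constant-free d-DNNF gate is never the zero function. Its certificates witness nonzero values, because literals can be set to satisfy any tree certificate: certificates are trees and decomposable, so they contain no contradictory literals. A gate whose label is nonconstant after propagation is therefore not identically zero. Consequently the case ``nonzero constant child alongside another child'' can occur only when all other children are constant $0$, and these are dropped. So a $+$-gate either becomes constant with value $\sum_j w_j c_j$ (all surviving children constant), or it has only nonconstant surviving children. In the latter case it is a constant-free $+$-gate with $\lambda_g = 1$, and determinism is inherited because the functions $w_i\lambda_{g_i} g_i'$ equal $w_i g_i$.

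Correctness of $g = \lambda_g g'$ (or $g \equiv c_g$) follows by induction on the topological order. Because $D$ is nonconstant, the root is labelled nonconstant; we return $r'$ with the root scalar $\lambda_r$ absorbed. This is done either by adding a fresh unary $+$-gate with weight $\lambda_r$ on top, or by multiplying the weights of the root if it is a $+$-gate.

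For the bounds, each original gate yields at most one new gate and wires are only deleted, except that the final unary root adds one gate and one wire. Unary $\times$-gates created by dropping children can be kept as they are, since they are legal in general d-DNNF. The simplifications never lengthen paths, so $|D'| \le 2|D|$ and $\depth(D') \le 2\depth(D)$ hold with room to spare. The pass performs constant work per wire plus complex multiplications, so it runs in linear time up to complex arithmetic.
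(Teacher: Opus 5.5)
Your proof is correct, but it is organized differently from the paper's.

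The paper uses three separate passes:
\begin{enumerate}
\item it deletes every gate computing the zero function, citing Darwiche's linear-time satisfiability propagation;
\item it evaluates every gate with $\var(g)=\varnothing$ to a constant, which is necessarily nonzero, and notes that by determinism a $+$-gate with a nonzero variable-carrying child then has no constant child;
\item it rewrites each $\times$-gate with constant children $c_j$ as a $\times$-gate over its remaining children, feeding into a fresh unary $+$-gate of weight $\prod_j c_j$.
\end{enumerate}

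You instead use a single bottom-up pass that maintains $g=\lambda_g g'$. You push each scalar $\lambda_g$ upward: into the parent's own scalar at $\times$-parents, and into the edge weight at $+$-parents. So at most one new gate, at the root, is ever created. Your zero detection is syntactic. The fact that a constant-free weighted d-DNNF with nonzero weights is never identically zero does the job of the paper's first pass: it rules out $+$-gates mixing a nonzero constant child with a nonconstant one.

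Your route gives sharper bounds, $|D'|\le|D|+2$ and $\depth(D')\le\depth(D)+1$, and avoids the external citation. The paper's route is purely local: each $\times$-gate is patched independently, with no scalar bookkeeping across fan-out. That is why it only gets the factor-$2$ bounds, but it is simpler to state.

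One small point. In your nonzero argument, going from ``the certificate has nonzero value under $x$'' to ``$g'(x)\neq 0$'' needs determinism at the $+$-gates: one nonzero term forces all other terms to vanish, so nothing cancels. It is cleaner to prove the fact by direct bottom-up induction, using decomposability to combine witnesses at $\times$-gates and determinism at $+$-gates.
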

\begin{proof}
	Remove all gates that compute the zero function.
	Identifying these gates can be done altogether in linear time
	using the algorithm of \cite[Section 2.2]{darwiche_decomposable_2001}.
	Next, using depth-first traversal,
	find the set of gates $g$ that satisfy $\var(g) = \varnothing$.
	These are nonzero (we have removed zero gates),
	and are replaced with their constant value by evaluation,
	which takes linear time up to complex arithmetic.
	Now all $+$-gates take some non-constant gate as input,
	and so by determinism they do not take on any constant gate as input.
	The remaining constants are removed
	by replacing $\times$-gates
	with non-constant children $a_i$
	and constant children $c_j$
	by $\times$-gates with children $a_i$
	feeding into a unary $+$-gate with weight $\prod_j c_j$.
\end{proof}

\end{document}